\newcommand{\PAPER}[1]{#1}
\newcommand{\SOCG}[1]{}
\newcommand{\eps}{\varepsilon}
\newcommand{\polylog}{\mathop{\rm polylog}}
\newcommand{\Patrascu}{P\u{a}tra\c{s}cu}
\newcommand{\R}{\mathbb{R}}
\newcommand{\Ex}{\mathbb{E}}
\newcommand{\IGNORE}[1]{}
\title{Further Results on Colored Range Searching}
\author{Timothy M. Chan}{Department of Computer Science, University of Illinois at Urbana-Champaign, USA}{tmc@illinois.edu}{https://orcid.org/0000-0002-8093-0675}{
Supported in part by NSF Grant CCF-1814026.}
\author{Qizheng He}{Department of Computer Science, University of Illinois at Urbana-Champaign, USA}{qizheng6@illinois.edu}{}{}
\author{Yakov Nekrich}{Department of Computer Science, Michigan Technological University, Houghton, USA}{yakov.nekrich@googlemail.com}{}{}
\authorrunning{T.\,M. Chan, Q. He, and Y. Nekrich}
\keywords{Range searching, geometric data structures, randomized
incremental construction, random sampling, word RAM}
\begin{document}

\maketitle

\begin{abstract}
We present a number of new results about range searching for colored (or ``categorical'') data:

\begin{enumerate}
\item For a set of $n$ colored points in three dimensions, we describe
randomized data structures with $O(n\mathop{\rm polylog}n)$ space that can
report the distinct colors in any query orthogonal range (axis-aligned box) in $O(k\mathop{\rm polyloglog} n)$ expected time, where $k$ is the number of distinct colors in the range, assuming that coordinates are in $\{1,\ldots,n\}$.  Previous data structures require $O(\frac{\log n}{\log\log n} + k)$ query time.  Our result also implies improvements in higher constant dimensions.
\medskip
\item Our data structures can be adapted to halfspace ranges in three dimensions (or circular ranges in two dimensions), achieving $O(k\log n)$ expected query time.  Previous data structures require $O(k\log^2n)$ query time.
\medskip
\item For a set of $n$ colored points in two dimensions, we describe a data structure with $O(n\mathop{\rm polylog}n)$ space that can answer colored ``type-2'' range counting queries: report the number of occurrences of every distinct color in a query orthogonal range.  The query time is $O(\frac{\log n}{\log\log n} + k\log\log n)$, where $k$ is the number of distinct colors in the range.  Naively performing $k$ uncolored range counting queries would require $O(k\frac{\log n}{\log\log n})$ time.
\end{enumerate}

\noindent
Our data structures are designed using a variety of techniques, including colored variants of randomized incremental construction (which may be of independent interest), colored variants of shallow cuttings, and bit-packing tricks.
\end{abstract}

\section{Introduction}

{\em Colored range searching\/} (also known as
``categorical range searching'', or ``generalized range searching'') have been
extensively studied in computational geometry since the 1990s.
For example, see the papers
\cite{BozanisKMT95,ChanN20,El-ZeinMN17,GagieKNP13,GangulyMNST19,GrossiV14,
GuptaJS95,GuptaJS96,GuptaJS97,GuptaJS99,
JL93,Kap08,KapSoCG06,LarsenP12,LarsenvW13,M03,Muth02,
Nekrich14,NekrichV13,PatilTSNV14,ShiJ05} and the
survey by Gupta et al.~\cite{GuptaSURVEY}.
Given a set of $n$ colored data points
(where the color of a point represents
its ``category''), the objective is to
build data structures that can provide
statistics or some kind of summary about the colors of
the points inside a query range.  The most basic types of
queries include:

\begin{itemize}
\item \emph{colored range reporting}: report all the distinct colors in the query range.
\item \emph{colored ``type-1'' range counting}: find the number of distinct colors in the query range.
\item \emph{colored ``type-2'' range counting}: report the number 
of points of color $\chi$ in the query range, for \emph{every}
color $\chi$ in the range.
\end{itemize}

\noindent
In this paper, we focus on colored range reporting and type-2
colored range counting.  Note that the output size in both instances
is equal to the number $k$ of distinct colors in the range, and we aim for
query time bounds that depend linearly on $k$, of the form $O(f(n) + kg(n))$.
Naively using an uncolored range reporting data structure
and looping through all points in the range would be too costly, since
the number of points in the range can be significantly larger than $k$.

\subsection{Colored orthogonal range reporting}
The most basic version of the problem is perhaps \emph{colored orthogonal range reporting}:
report the $k$ distinct colors inside an orthogonal range (an axis-aligned box).  It is not difficult to obtain an $O(n\polylog n)$-space data structure with
$O(k\polylog n)$ query time~\cite{GuptaSURVEY} for any constant dimension~$d$: one approach is to
directly modify the $d$-dimensional range tree~\cite{BergBOOK,PreShaBOOK}, and another approach is to
reduce colored range reporting to uncolored range emptiness~\cite{GuptaJS99} (by building a one-dimensional range
tree over the colors and storing a range emptiness structure at each node).
Both approaches require
$O(k\polylog n)$ query time rather than 
$O(\polylog n + k)$ as in traditional
(uncolored) orthogonal range searching: the reason
is that in the first approach, each color may be discovered
polylogarithmically many times, whereas in the second approach, each  discovered color costs us $O(\log n)$ range emptiness queries, each of which requires polylogarithmic time.

Even the 2D case remains open, if one is interested in optimizing logarithmic factors.  For example, Larsen and van Walderveen~\cite{LarsenvW13}
and Nekrich~\cite{Nekrich14} independently presented data structures with $O(n\log n)$ space and $O(\log\log U + k)$ query time in the standard word-RAM model, assuming that
coordinates are integers bounded by $U$.  The query bound is optimal, but
the space bound is not.  Recently, Chan and Nekrich~\cite{ChanN20} have improved the space bound to $O(n\log^{3/4+\eps}n)$ for an arbitrarily small constant $\eps>0$, while keeping $O(\log\log U + k)$ query time.

In 3D, the best result to date is by Chan and Nekrich~\cite{ChanN20}, who
obtained a data structure with $O(n\log^{9/5+\eps}n)$ space
and $O(\frac{\log n}{\log\log n} + k)$ query time.
The first step is a data structure for the case of 3D dominance (i.e., 3-sided) ranges:
as they noted, this case can be solved in $O(n)$ space
and $O(\frac{\log n}{\log\log n}+k)$ time by a known reduction \cite[Section 3.1]{Rahul17} to
3D 5-sided box stabbing~\cite{ChaICALP18}.
For 3D 5-sided box stabbing (or more simply, 2D 4-sided rectangle stabbing),
a matching lower bound of $\Omega(\frac{\log n}{\log\log n} + k)$ is known for
$O(n\polylog n)$-space structures, due to
\Patrascu~\cite{Patrascu11}.  A natural question then arises: is
$O(\frac{\log n}{\log\log n} + k)$ query time also tight for 3D colored dominance range reporting?

We show that the answer is no---the $O(\frac{\log n}{\log\log n})$ term can in fact be improved when $k$ is small.
Specifically, we present a randomized data structure for 3D colored dominance range
reporting with $O(n\log n)$ space and $O(\log\log U + k\log\log n)$ expected time in the standard word-RAM model.  (We use only Las Vegas randomization, i.e., the query algorithm is always correct; an oblivious adversary is assumed, i.e., the query range should be independent of the random choices made by the preprocessing algorithm.)  Combining with Chan and Nekrich's method~\cite{ChanN20}, we can then
obtain a data structure for 3D colored orthogonal range reporting with
$O(n\log^{2+\eps}n)$ space and $O(\log\log U + k\log\log n)$
expected query time.

An improved solution in 3D automatically implies improvements in any constant dimension $d>3$, by using standard range trees~\cite{BergBOOK,PreShaBOOK} to reduce the dimension, at a cost of about one logarithmic factor (ignoring $\log\log$ factors) per dimension.  This way, we obtain a data structure in $d$ dimensions with
$O(n\log^{d-1+\eps}n)$ space and $O(k(\frac{\log n}{\log\log n})^{d-3}\log\log n)$ query time.\footnote{In all reported bounds,
we implicitly assume $k>0$.  The $k=0$ case can be handled by
answering one initial uncolored range emptiness query.}
(Note that $O((\frac{\log n}{\log\log n})^{d-3}\log\log n)$ is the current best
query time bound for standard (uncolored) range emptiness~\cite{ChanLP11} for $O(n\polylog n)$-space structures on the word RAM\@.)

\subsection{Colored 3D halfspace range reporting}

An equally fundamental problem is \emph{colored halfspace range reporting}.
In 2D, an $O(n)$-space data structure with $O(\log n+k)$ query time
is known~\cite{AgarwalCTY09,GuptaSURVEY}.
In 3D, the current best result is obtained by applying a general reduction of colored
range reporting to uncolored range emptiness~\cite{GuptaJS99}, which yields
$O(n\log n)$ space and $O(k\log^2 n)$ query time~\cite{GuptaSURVEY}.
(An alternative solution with $O(n)$ space and $O(n^{2/3+\eps} + k)$ time
is also known, by reduction to simplex range searching.)
The 3D case is especially important, as 2D colored circular range
reporting reduces to 3D colored halfspace range reporting by the
standard lifting transformation.

We describe a randomized data structure with $O(n\log n)$ space
and $O(k\log n)$ expected query time for 3D halfspace ranges (and thus 2D circular ranges).  This is
a logarithmic-factor improvement over the previous query time bound.

\subsection{Colored 2D orthogonal type-2 range counting}
Finally, we consider colored orthogonal ``type-2'' range counting:
compute the number of occurrences of every color in a given orthogonal
range.  Despite the nondescript name, colored type-2 counting is quite
natural, providing more information than colored reporting, as we are generating an entire histogram.
The problem was introduced by Gupta et al.~\cite{GuptaJS95} (and more recently revisited by Ganguly et al.~\cite{GangulyMNST19} in external memory).  An old
paper by Bozanis et al.~\cite{BozanisKMT95} gave a solution in the 1D case
with $O(n)$ space and $O(\log n + k)$ query time, which implies 
a solution in 2D with $O(n\log n)$ space and $O(\log^2 n + k\log n)$ query
time.  Alternatively, to answer a colored type-2 counting query,
we can first answer
a colored range reporting query, followed by $k$
standard (uncolored) range counting queries, if we store each color class in
a standard range counting data structure; by known results on colored range reporting~\cite{ChanN20} and standard range counting~\cite{JaJaMS04}, this then yields
$O(n\log^{3/4+\eps}n)$ space and $O(k\frac{\log n}{\log\log n})$ query time.
Thus, in some sense, a type-2 counting query corresponds to 
``simultaneous'' range counting queries on multiple point sets.\footnote{
See \cite{AfshaniSTW14} for a different notion of ``concurrent'' range reporting.
}

We present a data structure for the problem in 2D with
$O(n\log^{1+\eps} n)$ space and $O(\frac{\log n}{\log\log n} + k\log\log n)$
query time in the standard word-RAM model.  As 2D standard (uncolored) range counting has an
$\Omega(\frac{\log n}{\log\log n})$ time lower bound for
$O(n\polylog n)$-space structures~\cite{Patrascu07}, our result shows, surprisingly, that
answering multiple range counting queries ``simultaneously'' are cheaper than 
answering one by one---we only have to pay $O(\log\log n)$ cost
per color!

\subsection{Techniques}

Our solutions for colored 3D dominance range reporting and 3D halfspace range reporting are based on similar ideas.  We in fact propose two different
methods.  

In the first method (Section~\ref{sec:first}), we solve the $k=1$ case (testing
whether a range contains only one color) by introducing a colored variant of 
\emph{randomized
incremental construction}; we then extend the solution to the general case
by a randomized one-dimensional range tree over the colors.  Along the way,
we prove a combinatorial lemma which may be of independent interest:
in a colored point set in 3D, if we randomly permute the color classes
and randomly permute the points within each color classes, and if we
insert the points in the resulting order, then
the convex hull undergoes $O(n\log n)$ structural changes in expectation.
(It is a well known fact, by Clarkson and Shor~\cite{clarkson1989applications}, that in the uncolored setting, if we insert points in a random order,
the 3D convex hull undergoes $O(n)$ structural changes in expectation.)

In the second method (Section~\ref{sec:second}), which is slightly more efficient, we solve the $k=1$ case differently, by
adapting known techniques for uncolored 3D halfspace range reporting~\cite{Chan00}
based on \emph{random sampling} (namely, conflict lists of 
lower envelopes of random subsets).
The approach guarantees only $\Omega(1)$
success probability per query (in the uncolored setting, \emph{shallow cuttings} can fix the problem, but they do not seem easily generalizable to the 3D colored setting).  Fortunately, we show that a solution
for the $k=1$ case with constant
success probability is sufficient
to complete the solution for the general case.

Our method for colored 2D type-2 orthogonal range counting (Section~\ref{sec:type2}) is technically the most involved.  It is
obtained by a nontrivial combination of several techniques, including the
\emph{recursive grid} approach of Alstrup, Brodal, and Rauhe~\cite{AlstrupBR00},
bit packing tricks, and 2D shallow cuttings.  Our work demonstrates yet again
the power of the recursive grid approach (see \cite{ChanLP11,ChanN20,ChaICALP18} for other recent examples).

\section{Colored 3D Halfspace Range Reporting: First Method}\label{sec:first}

In this and the next section, we describe our two methods for 3D halfspace ranges.  The case of 3D dominance ranges is similar and will be addressed later in Section~\ref{sec:orth}.  

\subsection{Combinatorial lemmas on colored randomized incremental construction}

Our first method relies on a simple combinatorial lemma related to a colored version of randomized incremental construction of 3D convex hulls (the uncolored version of the lemma, where all points are assigned different colors, is well known in computational geometry, from the seminal work by Clarkson and Shor~\cite{clarkson1989applications}):

\newcommand{\CH}{\textrm{CH}}

\begin{lemma}\label{lemma:CH}
Given a set $S$ of $n$ colored points in $\mathbb{R}^3$, if we first randomly permute the color classes, then for each color according to this order we simultaneously insert all points with that color, then the expected total number of structural changes 
to the convex hull is $O(n)$.
\end{lemma}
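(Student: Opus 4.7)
The plan is to adapt the classical Clarkson--Shor backwards analysis~\cite{clarkson1989applications} to this color-batched insertion process. Let $c$ denote the number of distinct colors, let $\chi_1,\ldots,\chi_c$ be the random permutation of the colors, and write $S_i$ for the point set after the first $i$ batches. Let $F'_i$ denote the number of facets appearing in $\CH(S_i)$ that were not facets of $\CH(S_{i-1})$.

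First I would argue that the total number of structural changes is $O\bigl(\sum_{i=1}^c F'_i\bigr)$: every destroyed facet had to be created at some earlier step, so over the whole construction the number of destructions is at most the number of creations, and the number of edge/vertex changes is controlled by the number of facet changes via Euler's formula for $3$-polytopes. The key combinatorial observation is that any facet newly appearing in step $i$ must have at least one vertex of color $\chi_i$, since otherwise it would already have been a facet of $\CH(S_{i-1})$. Hence $F'_i$ is at most the number of facets of $\CH(S_i)$ incident to a vertex of color $\chi_i$.

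The heart of the argument is a backwards-analysis average. Condition on the unordered set $T=\{\chi_1,\ldots,\chi_i\}$; by symmetry of the uniform random permutation, the last-inserted color $\chi_i$ is uniform on $T$. Each triangular facet of $\CH(S_T)$ is incident to at most three colors, so
$$\sum_{\chi\in T}\#\{\text{facets of }\CH(S_T)\text{ with a vertex of color }\chi\}\ \le\ 3F(S_T),$$
where $F(S_T)$ is the facet count. Averaging gives $\Ex[F'_i\mid T]\le 3F(S_T)/i$. Now $F(S_T)\le 2|S_T|$ by Euler, and since $T$ is a uniformly random $i$-subset of the $c$ colors, each point appears in $S_T$ with probability $i/c$, giving $\Ex|S_T|=(i/c)\,n$. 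Therefore $\Ex[F'_i]\le 6n/c$, and summing over $i=1,\ldots,c$ yields the claimed $O(n)$.

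The step I expect to need the most care is the first reduction: a batch insertion can in principle create and destroy facets in a more intricate pattern than a single-point update, so one must verify that the structural changes incurred in step $i$ really are $O(F'_i)$ rather than something larger. This should however follow cleanly from the fact that a destroyed facet is never recreated, together with Euler's formula charging each lower-dimensional change to an adjacent facet change. Beyond that, the proof is essentially a one-line Clarkson--Shor calculation, the new ingredient being simply that $\Ex[|S_T|/i]=n/c$ is perfectly balanced so that the per-batch cost sums to $O(n)$ over all $c$ batches.
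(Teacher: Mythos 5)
Your proposal is correct and follows essentially the same route as the paper's proof: a Clarkson--Shor backwards analysis conditioning on the unordered set of the first $i$ color classes, bounding the newly created facets by the facet-incidences (degree) of the last-inserted color class, and using $\Ex[|V_i|]=in/m$ to get an $O(n/m)$ per-batch bound that sums to $O(n)$. Your extra care in reducing "structural changes" to "facets created" (destructions are dominated by creations, and edge/vertex changes are charged to incident new facets) is a sound elaboration of a step the paper treats implicitly.
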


\begin{proof}
Consider a random permutation of the colors.  Let $C_i$ be the $i$-th color class, i.e., the set of all points with the $i$-th color in the permutation.  Let 
$m$ be the number of color classes. Let $V_i=\bigcup_{j=1}^i C_j$ contain all points with the first $i$ colors.
Let $\CH(V_i)$ denote the convex hull of $V_i$.
Let $\Delta_i^+$ be the set of all facets in $\CH(V_i)$ that are not in $\CH(V_{i-1})$, i.e.,  all hull facets created when we insert the $i$-th color class $C_i$.

For each $i$, we have $\mathbb{E}[|C_i|]=\frac{n}{m}$ and 
$\Ex[|V_i|]=\frac{in}{m}$. We use backwards analysis~\cite{Seidel93}.
Observe that $|\Delta_i^+|$ is bounded by the total degree of all points of $C_i$ in $\CH(V_i)$. The total degree over all points in $\CH(V_i)$ is
$O(|V_i|)$.
Conditioned on a fixed $V_i$, we have $\mathbb{E}[|\Delta_i^+|]=O(\frac{|V_i|}{i})$. 
So, unconditionally, $\Ex[|\Delta_i^+|]=O(\frac{n}{m})$.
Therefore, the expected total number of hull facets created is $\mathbb{E}[\sum_{i=1}^m |\Delta_i^+|]=O(n)$.
\end{proof}

The following refinement of the lemma further bounds the total amount of changes to the convex hull when we additionally insert points one by one in a random order within each color class. 
The proof is slightly trickier.  (The first lemma is already sufficient to bound the space of our new data structure, but the refined lemma will be useful in bounding the preprocessing time.)

\begin{lemma}\label{lemma:CH:refined}
Given a set $S$ of $n$ colored points in $\mathbb{R}^3$, if we first randomly permute the color classes, then randomly permute the points in each color class, and insert the points one by one according to this order, then the expected total number of structural changes of the convex hull is $O(n\log n)$.
\end{lemma}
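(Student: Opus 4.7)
The plan is a two-level backwards analysis: on the random within-class order first, and then on the random class permutation, in the style of Lemma~\ref{lemma:CH}. I would first condition on the class order $\pi$, so $V_{i-1}$ is fixed and $C_i=C_{\pi(i)}$ is the $i$th class. Write $R_{i,j}=\{p_{i,1},\ldots,p_{i,j}\}$ for the first $j$ points of $C_i$ in the random within-class order. Since within a class the order is uniform, the standard within-class backwards analysis gives
\[
\Ex[\deg(p_{i,j},\CH(V_{i-1}\cup R_{i,j}))\mid \pi,R_{i,j}] \;=\; \frac{1}{j}\sum_{p\in R_{i,j}}\deg(p,\CH(V_{i-1}\cup R_{i,j})),
\]
and this is at most $\tfrac{3}{j}$ times the number of facets of $\CH(V_{i-1}\cup R_{i,j})$ that are incident to the partial class $C_i$'s points in the set.

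Next I would take expectation over $\pi$ to gain an extra factor of $1/i$. The combined set $V_{i-1}\cup R_{i,j}$ contains points from exactly $i$ classes (the $i-1$ complete classes plus the partial class contributing $j$ points); conditioning on the unordered set of these $i$ classes and on the partial-prefix size $j$, the identity of the partial class is uniform over the $i$ choices. Since each facet of $\CH(V_{i-1}\cup R_{i,j})$ is incident to at most three classes, averaging the same counting argument used in the proof of Lemma~\ref{lemma:CH} over the choice of partial class yields
\[
\Ex_\pi\!\left[\,\sum_{p\in R_{i,j}}\deg(p,\CH(V_{i-1}\cup R_{i,j}))\,\right] \;=\; O\!\left(\frac{|V_{i-1}|+j}{i}\right),
\]
so $\Ex[\deg(p_{i,j})]=O\!\left(\frac{|V_{i-1}|+j}{ij}\right)$.

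Summing this bound over $j=1,\ldots,|C_i|$ gives per-class expected changes $O\!\left(\frac{|V_{i-1}|}{i}\log|C_i|+\frac{|C_i|}{i}\right)$. Using $\Ex_\pi[|V_{i-1}|]=(i-1)n/m$ and $\Ex_\pi[|C_i|]=n/m$ and summing over $i=1,\ldots,m$, this telescopes to $O(n\log(n/m)+n)=O(n\log n)$, as required.

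The hardest part is rigorously justifying the class-level averaging in the second paragraph: each hypothetical choice of ``which of the $i$ classes is partial'' produces a \emph{different} combined set $V_{i-1}\cup R_{i,j}$ (the partial class contributes only its $j$-subset, the others contribute all their points), so the single-hull facet-class-incidence sum used in Lemma~\ref{lemma:CH} does not transfer directly. I expect the fix is to condition on the unordered set of the $i$ participating classes together with $j$, and to bound the expectation uniformly across the resulting family of sets---exploiting that they differ only in the $j$-subset of the partial class, so the Clarkson--Shor counting still bounds the averaged facet-count by $O((|V_{i-1}|+j)/i)$ up to constants.
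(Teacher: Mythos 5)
Your class-level averaging step is not merely the hard part to justify---the bound you want there is false, and the very example the paper uses to show that $O(n\log n)$ is tight is a counterexample. Take $n/2$ points in convex position on the $xy$-plane, each its own color, and one class $\chi_0$ of $n/2$ points above the plane. Condition on the unordered set of the first $i$ classes (containing $\chi_0$) and on a prefix size $j\ge 2$: the partial class is then \emph{forced} to be $\chi_0$, since no other class has $j$ points, so ``the identity of the partial class is uniform over the $i$ choices'' already fails once class sizes differ. More fundamentally, the quantity itself is large: when $\chi_0$ is partial and $i=\Theta(n)$ polygon points are present, the highest inserted point of $\chi_0$ alone is joined to $\Theta(i)$ polygon vertices, so $\sum_{p\in R_{i,j}}\deg(p,\mathrm{CH}(V_{i-1}\cup R_{i,j}))=\Theta(i+j)$ rather than $O((|V_{i-1}|+j)/i)$; the facet-degree of the current hull can be concentrated entirely on the partial class, so there is no $1/i$ gain available, and no conditioning trick can recover it. Indeed, your per-class bound $O\bigl(\frac{|V_{i-1}|}{i}\log|C_i|+\frac{|C_i|}{i}\bigr)$ and your total $O(n\log(n/m)+n)$ would evaluate to $O(n)$ on this instance (it has $m=\Theta(n)$, hence $n/m=O(1)$), contradicting the $\Omega(n\log n)$ lower bound the instance actually exhibits: with constant probability $\chi_0$ arrives after $\Theta(n)$ polygon points, and each of the expected $\Theta(\log n)$ record-height insertions within $\chi_0$ creates $\Theta(n)$ facets. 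So the statement you are trying to prove at that step is too strong to be true.

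The paper's proof gets the needed control without ever averaging over which class is partial. It shows that the total degree of $C_i$'s points in any intermediate hull $\mathrm{CH}(V_{i,j})$ is $O(|C_i|+|\Delta_i^-|)$, where $\Delta_i^-$ is the set of facets of $\mathrm{CH}(V_{i-1})$ destroyed by inserting $C_i$: any hull neighbor of a $C_i$-vertex not itself in $C_i$ must be incident to a destroyed facet, and planarity of the subgraph of hull edges incident to $C_i$ bounds its number of edges by $O(|C_i|+|\Delta_i^-|)$. Within-class backwards analysis then gives $\mathbb{E}[|\Delta_{i,j}^+|]=O((|C_i|+|\Delta_i^-|)/j)$, summing to $O\bigl(\sum_i(|C_i|+|\Delta_i^-|)\log n\bigr)$, which is $O(n\log n)$ because $\mathbb{E}[\sum_i|\Delta_i^-|]\le\mathbb{E}[\sum_i|\Delta_i^+|]=O(n)$ by Lemma~\ref{lemma:CH}. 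In the counterexample above it is precisely the term $|\Delta_{i}^-|=\Theta(n)$ for $\chi_0$ that pays for the $\Theta(n\log n)$ changes---exactly the contribution your $1/i$ averaging would erase.
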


\begin{proof}
Continuing the earlier proof,
let $\Delta_i^-$ be the set of all facets in $\CH(V_{i-1})$ that are not in $\CH(V_{i})$, i.e.,  all hull facets destroyed when we insert the $i$-th color class $C_i$.  Since the total number of facets destroyed is at most the total number of facets created, $\Ex[\sum_{i=1}^m |\Delta_i^-|]\le \Ex[\sum_{i=1}^m |\Delta_i^+|] = O(n)$.

Now, consider a random permutation of the points in $C_i$.
Let $V_{i,j}$ contain all points in $V_{i-1}$ and also the first $j$ points of $C_i$. 
Let $G_{i,j}$ be the subgraph formed by all edges of $\CH(V_{i,j})$ that are incident to the vertices of $C_i$.  Then every vertex $v$ in $G_{i,j}$ is either in $C_i$ or is incident to a facet of $\Delta_i^-$ (because if $v\not\in C_i$, then $v$ must be a vertex of $\CH(V_{i-1})$, and at least one of its incident facets in $\CH(V_{i-1})$ will be destroyed when $C_i$ is inserted).  Thus, $G_{i,j}$ has $O(|C_i| + |\Delta_i^-|)$ vertices, and since $G_{i,j}$ is a planar graph, it has $O(|C_i| + |\Delta_i^-|)$
edges.

Let $\Delta_{i,j}^+$ be the set of all facets in $\CH(V_{i,j})$ that are not in $\CH(V_{i,j-1})$, i.e., all hull facets created when we insert the $j$-th point in $C_i$.
We use backwards analysis again.
Observe that $|\Delta_{i,j}^+|$ is bounded by the degree of the $j$-th point in $C_i$ in $\CH(V_{i,j})$.
The total degree over all points of $C_i$ in $\CH(V_{i,j})$ is at most twice the number of edges in $G_{i,j}$.
Conditioned on a fixed $C_i$ and a fixed $V_{i,j}$, we thus have $\mathbb{E}[|\Delta_{i,j}^+|]=O\left(\frac{|C_i| + |\Delta_i^-|}{j}\right)$.
As the right-hand side does not depend on the local permutation of the color class $C_i$, the expectation holds conditioned only on the global permutation of the colors.  Unconditionally, the expected total number of hull facets created is
\[O\left(\Ex\left[\sum_{i=1}^m\sum_{j=1}^{|C_i|}\frac{|C_i| + |\Delta_i^-|}{j}\right]\right) = O\left(\Ex\left[\sum_{i=1}^m (|C_i| + |\Delta_i^-|)\log n\right]\right)
= O(n\log n).\]
\end{proof}

\subparagraph*{Remarks.}\ 

\begin{enumerate}
    \item The $O(n\log n)$ bound in the refined lemma is tight:
Consider $\frac{n}{2}$ points lying on the $xy$-plane in convex position, each assigned a different color.  In addition, add
$\frac{n}{2}$ points on the $z$-axis above the $xy$-plane, all with a common color $\chi_0$.
When we insert the color class for $\chi_0$, there are already $\Omega(n)$ points on the $xy$-plane with probability $\Omega(1)$. In an iteration where the next point we insert with color $\chi_0$ has larger $z$-coordinate than all previous points, the insertion would create $\Omega(n)$ new hull edges in expectation. By a well known analysis,
the expected number of such iterations is given by the Harmonic number, which is $\Theta(\log n)$.  This shows an $\Omega(n\log n)$ lower bound.
 \medskip
 \item The same argument holds for other geometric structures besides 3D convex hulls, e.g., Voronoi diagrams of 2D points and trapezoidal decompositions of 2D disjoint line segments.
 
\medskip
\item
 We can generalize the refined lemma to the setting when we have a hierarchy of color classes with $\ell$ levels, and we randomly permute the child subclasses of each color class.  (The refined lemma corresponds to the $\ell=2$ case.)  The bound becomes $O(n\log^{\ell-1}n)$.
 This result seems potentially relevant to 
 implementing randomized incremental constructions in a hierarchical external-memory model.



\end{enumerate}




\subsection{The $k=1$ case}

We now reveal how colored randomized incremental construction can help solve the colored range reporting problem.  We start with
the case $k=1$, i.e., we want to test whether there is only one color in the query range.  By an uncolored range search, we can find one point in the range (in $O(\log n)$ time for 3D halfspace ranges) and identify its color $\chi$.  Thus, the problem is to verify that all points in the range have the same color $\chi$.

Fix a total ordering of the colors.  It is easy to see that the problem reduces to two subproblems: for a given query color $\chi$, (i)~decide whether there exists a point in the range with color $<\chi$, and (ii)~decide whether there exists a point in the range with color $>\chi$.  By symmetry, it suffices to solve subproblem~(i). To this end, we imagine inserting the points in increasing order
of color, and maintaining a data structure for (uncolored) range emptiness for the points.
We can make this semi-dynamic data structure (which supports insertions only) \emph{persistent}.  Then we can solve subproblem (i) by querying a past version of the range emptiness data structure, right after all points with color $<\chi$ were inserted.

In the case of 3D upper halfspaces (lower halfspaces can be handled symmetrically), a range emptiness query reduces to finding an extreme point on the upper hull along a query direction, or equivalently, intersecting the lower envelope of the dual planes at a query vertical line.  By projection, this reduces to a planar point location query, answerable in $O(\log n)$ time by a linear-space data structure~\cite{BergBOOK,PreShaBOOK}.
However, we need a data structure that supports insertions, and in general this increases the query time (by an extra logarithmic factor via the standard ``logarithmic method''~\cite{BentleyS80}).

The key is to observe that the above approach works regardless of which total ordering of the colors we use.  Our idea is simply to use a \emph{random ordering} of the colors!  (For (ii), note that the reverse of a uniformly random ordering is still uniformly random.)  By Lemma~\ref{lemma:CH}, the upper hull undergoes $O(n)$ expected number of structural changes.  So is the dual lower envelope.  We can then apply a known dynamic planar point location method; for example, the method by
Chan and Nekrich~\cite{chan2018towards} achieves $O(\log n(\log\log n)^2)$ query time and $O(\log n\log\log n)$ amortized update time per change to the envelope.
The data structure can be made persistent, for example,
by applying Dietz's technique~\cite{Dietz89a}, with a $\log\log n$ factor penalty (the space usage is related to the total update time).  The final data structure supports queries in $O(\log n (\log\log n)^3)$ (worst-case) time and uses $O(n\log n(\log\log n)^2)$ expected space.  (Note that the space bound can be made worst-case, by repeating $O(1)$ expected number of times until a ``good'' ordering is found.)

\subparagraph*{Remark on preprocessing time.}
It isn't obvious how to efficiently insert an entire color class to the 3D convex hull, even knowing that the total number of structural changes is small.  To get good preprocessing time, we propose inserting points one by one within each color class, since Lemma~\ref{lemma:CH:refined} ensures that the number of changes to the convex hull is still near linear ($O(n\log n)$).
Several implementation options can then yield $O(n\polylog n)$ expected preprocessing time: (i)~we can use a general-purpose dynamic convex hull data structure~\cite{Chan10} (in the insertion-only case, the cost per update is $O(f\log^2n)$ where $f$ is the amount of structural changes); (ii)~we can adapt standard randomized incremental algorithms, e.g., handling the point location steps by using history DAGs~\cite{mulmuley1994computational} (this requires further randomized analysis); or (iii)~we can adapt standard randomized incremental algorithms, but handling the point location steps by using a known dynamic planar point location method~\cite{chan2018towards}.

\begin{theorem}\label{thm:one}
For $n$ colored points in $\R^3$, there is a data structure with
$O(n\polylog n)$ expected preprocessing time and $O(n\log n (\log\log n)^2)$ space that can test whether the number of colors in a query halfspace is exactly~$1$ in
$O(\log n (\log\log n)^3)$ time.
\end{theorem}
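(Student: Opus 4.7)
The plan is to follow the route laid out in the preceding subsection. First I would reduce the ``exactly one color'' test to two range-emptiness subqueries: a single uncolored halfspace reporting query locates some point of the query range $H$ and returns its color $\chi$ in $O(\log n)$ time, and then, under any fixed total order on the colors, it suffices to decide whether $H$ contains a point with color $<\chi$ and whether $H$ contains a point with color $>\chi$. By symmetry (reversing a uniformly random order is still uniformly random), one data structure for the ``$<\chi$'' case is enough.

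Next, I would fix a uniformly random ordering of the color classes and build a persistent insertion-only halfspace range-emptiness structure by inserting points color class by color class. With persistence in hand, the ``$<\chi$'' subproblem is answered by querying the snapshot taken just before the class of color $\chi$ is processed. Point-plane duality converts halfspace emptiness into planar point location on the lower envelope of the dual planes, so the combinatorial object being maintained is the upper hull (equivalently, its dual lower envelope); Lemma~\ref{lemma:CH} bounds its total number of structural changes during the insertion sequence by $O(n)$ in expectation. To actually feed those changes one at a time into a dynamic planar point location structure, I would further insert the points within each color class one by one in a uniformly random intra-class order, and invoke Lemma~\ref{lemma:CH:refined} to bound the total number of envelope updates by $O(n\log n)$ in expectation.

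For the dynamic point-location layer I would plug in the Chan--Nekrich structure~\cite{chan2018towards} (with $O(\log n(\log\log n)^2)$ query time and $O(\log n\log\log n)$ amortized time per envelope update), and make it persistent via Dietz's node-copying technique~\cite{Dietz89a}, which introduces a $\log\log n$ factor penalty. Multiplying through yields the $O(\log n(\log\log n)^3)$ worst-case query time and the $O(n\log n(\log\log n)^2)$ expected space claimed in the statement; the expected space bound can be converted to worst-case by rerunning the random ordering whenever the realized number of envelope updates exceeds a constant multiple of its expectation, which happens with only constant probability by Markov's inequality, so $O(1)$ trials suffice in expectation.

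The main obstacle, and the reason Lemma~\ref{lemma:CH:refined} is needed in the first place, is the $O(n\polylog n)$ preprocessing bound: inserting an entire color class in one shot into a dynamic 3D convex hull is not obviously efficient even when the net number of structural changes to the hull is small. Point-by-point random insertion resolves this, since I can then rely on any of the standard options mentioned in the preceding remark---a generic dynamic convex hull at $O(f\log^2 n)$ per update~\cite{Chan10}, or a Clarkson--Shor-style randomized incremental construction with a history DAG~\cite{mulmuley1994computational} (analyzed over the intra-class randomness), or the same dynamic planar point location structure~\cite{chan2018towards} used for queries---to obtain $O(n\polylog n)$ expected preprocessing time once the $O(n\log n)$ bound on total structural changes from Lemma~\ref{lemma:CH:refined} is in place.
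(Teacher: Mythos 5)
Your overall route is the same as the paper's: reduce the ``exactly one color'' test to the two one-sided subproblems, randomize the color order, dualize to lower-envelope point location, run the Chan--Nekrich dynamic point-location structure, and persist it via Dietz. The one genuine problem is in your space accounting. You propose to ``feed'' the envelope changes into the persistent point-location structure by inserting points one by one within each color class, and you invoke Lemma~\ref{lemma:CH:refined} to bound the number of such updates by $O(n\log n)$. But the space of a persistent structure is proportional to the total update time of the underlying ephemeral structure (times the $\log\log n$ persistence penalty), so with $O(n\log n)$ updates at $O(\log n\log\log n)$ amortized cost each you get $O(n\log^2 n(\log\log n)^2)$ space --- a $\log n$ factor above the claimed $O(n\log n(\log\log n)^2)$. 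Your ``multiplying through'' implicitly uses $O(n)$ updates when bounding the space but $O(n\log n)$ updates when describing the construction; these cannot both describe the same persistent structure.

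The paper keeps the two lemmas in separate roles, which is exactly the point of its remark that ``the first lemma is already sufficient to bound the space... but the refined lemma will be useful in bounding the preprocessing time.'' Only the versions at color-class boundaries are ever queried, and by Lemma~\ref{lemma:CH} the net structural change between consecutive such versions totals $O(n)$ in expectation; the persistent point-location structure is charged only for these $O(n)$ net changes, which yields the stated $O(n\log n(\log\log n)^2)$ space. The point-by-point random insertion and Lemma~\ref{lemma:CH:refined} enter only in the preprocessing phase, as a device (run on a separate, non-persistent incremental construction, e.g.\ a history DAG or a dynamic hull structure) to \emph{compute} what those net changes are in $O(n\polylog n)$ expected time; the $O(n\log n)$ intermediate changes are never recorded in the persistent structure. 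With that separation your argument matches the paper's and the bounds come out as claimed.
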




\subsection{The general case}

Previous papers~\cite{GuptaSURVEY,GuptaJS99}  (see also~\cite{ChazelleCPY86} in the uncolored case) have noted a straightforward black-box reduction of colored range reporting to the $k=0$ case (range emptiness),
essentially by using a one-dimensional range tree over the colors:
More precisely, we split the color classes into two halves.  We build
a data structure for $k=0$, and recursively build a data structure for the two halves.  Space usage increases by a logarithmic factor.  If the $k=0$ structure has $Q_0(n)$ query time,
the overall query time is $O(kQ_0(n)\log n)$, since at each of the $O(\log n)$ levels of recursion tree, $O(k)$ nodes are examined.

We present a new black-box reduction of colored range reporting to the $k\le 1$ case, which saves a logarithmic factor, by using a similar idea but with randomization. 

\begin{theorem}\label{thm:reduction}
Suppose that for $n$ colored points, there is a data structure with $P(n)$ (expected) preprocessing time and $S(n)$ space that can decide whether the number of colors in a query range is exactly $1$ in $Q_1(n)$ time. 
In addition, the data structure can decide whether the range is empty, and if not, report one point, in $Q_0(n)$ time. 
Then there is a randomized Las Vegas data structure with $O(P(n)\log n)$ expected preprocessing time and $O(S(n)\log n)$ space that can report all $k$ distinct colors in a query range in $O(k(Q_0(n)+Q_1(n)))$ expected time, assuming that $P(n)/n$ and $S(n)/n$ are nondecreasing.
\end{theorem}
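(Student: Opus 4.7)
The plan is to use a random 1D range tree over the colors, exploiting the given $Q_1$ test at each node so that the expected number of visited nodes is $O(k)$ rather than the $O(k\log n)$ one gets from the deterministic black-box reduction. Concretely, I would permute the $m\le n$ color classes uniformly at random and list them as the leaves of a balanced binary tree $T$ of depth $O(\log n)$; at each node $v$, I build the given data structure on the set $S_v$ of all input points whose color appears at a leaf below $v$. To answer a query I descend from the root: at a visited node $v$ I first invoke the given $Q_0$ structure, stopping if $S_v$ has no point in the range, and otherwise invoke $Q_1$; if exactly one color is present I retrieve a representative point via $Q_0$ and report its color, and otherwise I recurse into both children.

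Preprocessing and space are routine. At each level of $T$ the sizes $|S_v|$ sum to $n$, and the assumption that $P(n)/n$ and $S(n)/n$ are nondecreasing gives $P(|S_v|)\le |S_v|\cdot P(n)/n$, so $\sum_v P(|S_v|)\le P(n)$ and $\sum_v S(|S_v|)\le S(n)$ per level. Summing over the $O(\log n)$ levels yields the claimed $O(P(n)\log n)$ expected preprocessing and $O(S(n)\log n)$ space.

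The main step is the query-time analysis. Let $B$ be the set of visited nodes whose subtree contains at least two of the $k$ ``interesting'' colors (those appearing in the range); every visited non-root node is a child of some node in $B$, so the total number of visited nodes is at most $1+2|B|$ and it suffices to show $\Ex[|B|]=O(k)$. Because the query is independent of the random permutation (oblivious adversary), the $k$ interesting colors occupy a uniformly random $k$-subset of the $m$ leaf positions. For a fixed node at depth $\ell$ (subtree size $s=m/2^\ell$), a union bound over pairs of positions gives $\Pr[\text{at least two interesting colors fall in it}]\le \binom{s}{2}\binom{k}{2}/\binom{m}{2}=O(k^2/4^\ell)$; summing over the $2^\ell$ nodes at depth $\ell$ yields $O(k^2/2^\ell)$, which we cap trivially at $2^\ell$. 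Splitting the sum at $\ell^*\approx\log_2 k$ then gives $\sum_\ell\min(2^\ell,k^2/2^\ell)=O(k)$. Multiplying $\Ex[|B|]=O(k)$ by the $O(Q_0+Q_1)$ cost per visit gives the claimed expected query time. The crux is this level-by-level bound: without the random permutation of colors, the $k$ interesting leaves could form a single long chain in $T$ and force $|B|=\Theta(k\log n)$, so the randomization of the color order at the leaves is exactly what buys the missing logarithmic factor.
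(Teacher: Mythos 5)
Your proposal is correct and follows essentially the same approach as the paper: a randomized one-dimensional range tree over the colors, with the query cost bounded by a birthday-style pair-counting argument giving $\sum_\ell \min(2^\ell, k^2/2^\ell)=O(k)$ expected visited nodes under the oblivious-adversary assumption. The only cosmetic difference is that you randomly permute the colors at the leaves of a balanced tree (so the $O(\log n)$ depth, space, and preprocessing bounds are deterministic), whereas the paper independently assigns each color to one of two halves at every level and gets the logarithmic-factor space blowup only with high probability; the analysis is otherwise identical.
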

\begin{proof}
We split the color classes into two parts, where
each color is \emph{randomly} assigned to one of the two parts.
We build the given $k=1$ structure and range emptiness structure, and recursively build a data structure for the two parts.  Space usage increases by a logarithmic factor (with high probability).

To answer a query, we test whether the range is empty or whether $k=1$.  If so, we are done.  Otherwise, we recursively query both parts.

Consider a query range that is independent of the random choices made by the data structure.
At the $i$-th level of the recursion tree, how many nodes are examined (in expectation)?  This question is analogous to the following: place $k$ balls randomly (independently) into $2^i$ bins; how many bins contain two or more balls?  The number is upper-bounded by the number of pairs of balls that are in the same bin.  Since the probability that a fixed pair of balls are placed in the same bin is $1/2^i$, the expected number of pairs is at most 
$k^2/2^i$. 

Thus, the expected number of nodes examined at the $i$-th level is at most $\min\{2^i, k^2/2^i\}$. 
The overall expected number of nodes examined is
\[O\left(\sum_i \min\{2^i, k^2/2^i\}\right)
\:=\: O\left(\sum_{i:\, 2^i\le k} 2^i + \sum_{i:\, 2^i>k} k^2/2^i \right) \:=\: O(k).\]
\end{proof}

Combining Theorems~\ref{thm:one} and~\ref{thm:reduction} yields:

\begin{theorem}\label{thm:report}
For $n$ colored points in $\R^3$, there is a randomized Las Vegas data structure with
$O(n\polylog n)$ expected preprocessing time and $O(n\log^2n (\log\log n)^2)$ space that can report all $k$ distinct colors in a query halfspace in
$O(k\log n (\log\log n)^3)$ expected time.
\end{theorem}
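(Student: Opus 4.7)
The plan is to invoke Theorem~\ref{thm:reduction} with the $k=1$ data structure from Theorem~\ref{thm:one} as its building block. From Theorem~\ref{thm:one} I have $P(n) = O(n\polylog n)$, $S(n) = O(n\log n(\log\log n)^2)$, and $Q_1(n) = O(\log n(\log\log n)^3)$. The remaining ingredient required by Theorem~\ref{thm:reduction} is a structure that decides emptiness of a query halfspace and, if nonempty, returns a witness point in $Q_0(n)$ time; for 3D halfspaces the standard approach is point location on the lower envelope of the dual planes, which gives $Q_0(n) = O(\log n)$ with $O(n)$ space. The witness point's color supplies the candidate color $\chi$ that Theorem~\ref{thm:one}'s tester needs in order to verify $k=1$, so at every successful leaf of the recursion an actual color name is reported, not merely a counter.

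Plugging these parameters into Theorem~\ref{thm:reduction}, I obtain expected preprocessing time $O(P(n)\log n) = O(n\polylog n)$, space $O(S(n)\log n) = O(n\log^2 n(\log\log n)^2)$, and expected query time $O\bigl(k(Q_0(n) + Q_1(n))\bigr) = O(k\log n(\log\log n)^3)$, matching the claim. The monotonicity hypothesis on $P(n)/n$ and $S(n)/n$ in Theorem~\ref{thm:reduction} is trivially satisfied since both are of the form $n$ times a nondecreasing polylogarithmic factor.

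There is essentially no new combinatorial content — the theorem is a direct combination of Theorems~\ref{thm:one} and~\ref{thm:reduction}. The one step I would verify carefully is that the Las Vegas guarantee survives the composition: the structure of Theorem~\ref{thm:one} is Las Vegas (by retrying the random color ordering $O(1)$ expected times until the persistent lower envelope realizes the $O(n)$ expected number of structural changes guaranteed by Lemma~\ref{lemma:CH}), and the reduction of Theorem~\ref{thm:reduction} is Las Vegas as well (always correct, expected time bound over its independent random bipartitions), so the final data structure is Las Vegas. The substantive work has already been done upstream — in the colored randomized-incremental-construction bound of Lemma~\ref{lemma:CH} and in the randomized reduction of Theorem~\ref{thm:reduction} — so the present theorem follows mechanically.
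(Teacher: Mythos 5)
Your proposal matches the paper exactly: Theorem~\ref{thm:report} is stated there as an immediate combination of Theorems~\ref{thm:one} and~\ref{thm:reduction}, with the same parameter bookkeeping you carried out (and the $Q_0(n)=O(\log n)$ halfspace emptiness/witness structure via dual lower-envelope point location is indeed the intended ingredient). Your extra check that the Las Vegas guarantee survives the composition is consistent with the paper's remarks and raises no issues.
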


\section{Colored 3D Halfspace Range Reporting: Second Method}\label{sec:second}

We next describe a slightly better (and simpler) method for colored 3D halfspace range reporting.  

\subsection{The $k=1$ case}

The idea is to relax the $k=1$ subproblem and allow the query algorithm to occasionally be wrong (since we will be using randomization anyways for the general case).  The algorithm has constant error probability and can only make one-sided errors: if it returns ``yes'', we must have $k=1$.
We work in dual space: given a set of colored planes in $\R^3$, we want to decide whether the number of colors among the planes below a query point is exactly~1.

\newcommand{\LE}{\textrm{LE}}
\newcommand{\VD}{\textrm{VD}}
\newcommand{\D}{\Delta}

\subparagraph*{Preprocessing.}
Take a random sample $R$ of the planes, where each color class is included independently with probability $\frac{1}{2}$.
Take the lower envelope $\LE(R)$ of $R$, and consider
the vertical decomposition $\VD(R)$ of the region underneath $\LE(R)$.  (The vertical decomposition is defined as follows: we triangulate each face of $\LE(R)$ by joining each vertex to the bottom vertex of the face; for each triangle, we form the unbounded prism containing all points underneath the triangle.)
For each cell $\D\in\VD(R)$, let $L_\Delta$ denote the set of distinct colors among all planes intersecting $\Delta$ (the ``color conflict list'' of $\Delta$).  We store the list $L_\Delta$ if $|L_\Delta|\le c$ for a sufficiently large constant~$c$; otherwise, we mark $\D$ as ``bad''.

Clearly, the space usage is $O(n)$, since there are $O(n)$ cells in $\VD(R)$ and each list stored has constant size.  To bound the preprocessing time, we can generate (up to $c$ elements of) each list $L_\Delta$ by answering colored range reporting queries at the three vertices of $\Delta$, since a plane intersects $\Delta$ iff it is below at least one of the vertices of $\Delta$.  By previous results, these $O(n)$ colored range reporting queries take $O(n\polylog n)$ time.

In addition, for each color class, we store an (uncolored) range emptiness structure (i.e., a planar point location structure for the $xy$-projection of the lower envelope of the color class).  This takes $O(n)$ space in total.

\subparagraph*{Querying.}
Given a query point $q$, we find the cell $\Delta(q)$ of $\VD(R)$ containing $q$ in $O(\log n)$ time by planar point location (on the $xy$-projection of $\VD(R)$).  If the cell does not exist (i.e., $q$ lies above $\LE(R)$), or if the cell is bad, we return ``no''.
Otherwise, for each of the at most $c$ colors in the conflict list $L_{\D(q)}$, we test whether any plane below $q$ has that color by
querying the corresponding range emptiness structure in $O(\log n)$ time.  We return ``yes'' iff 
exactly one color passes the test.
The overall query time is $O(\log n)$.

The algorithm is clearly correct if it returns ``yes''.
Consider a fixed query point $q$, such that there is just one color $\chi$ among all planes below $q$.
The algorithm would erroneously return ``no'' in two scenarios: (i) when $q$ lies above $\LE(R)$,
or (ii) when $|L_{\D(q)}|>c$.  The probability of (i) is the probability that the color $\chi$ is chosen in the random sample $R$, which is $\frac{1}{2}$.  By the following lemma, and Markov's inequality, the probability of (ii) is at most $0.1$ (say) for a sufficiently large constant $c$.  This lemma directly follows from
Clarkson and Shor's technique~\cite{clarkson1989applications}
\SOCG{(see the full paper for a quick proof).}%
\PAPER{; for the sake of self-containment, we include a proof below.}

\begin{lemma}\label{lem:sample}
    For a fixed point $q$, we have $\Ex[|L_{\D(q)}|]=O(1)$.
\end{lemma}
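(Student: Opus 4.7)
The plan is to instantiate Clarkson and Shor's random-sampling framework with \emph{color classes} (rather than individual planes) as the atomic objects, each included in $R$ independently with probability $p = 1/2$. Each potential prism $\Delta$ of a vertical decomposition is pinned down by $O(1)$ planes --- the face plane that carries its top triangle $T_\Delta$, together with the up to six further planes fixing the three corners of $T_\Delta$ --- and hence by a \emph{defining color set} $D_\Delta$ of size at most $7$. I declare that a color $\chi$ \emph{conflicts} with $\Delta$ if some plane of color $\chi$ stabs the interior of $\Delta$, so that $L_\Delta$ is exactly this set of conflicting colors, and I write $K_\Delta := L_\Delta \setminus D_\Delta$ for the conflicts that are not already defining.

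The crucial structural step to verify is the equivalence $\Delta \in \VD(R) \iff D_\Delta \subseteq R \text{ and } K_\Delta \cap R = \emptyset$ (under a mild general-position assumption ensuring no defining color has an extra stabbing plane). The ``only if'' direction is immediate. For the ``if'' direction, the point that requires care is that sampling a color brings in \emph{all} of its planes at once, not just one, so one must be sure the condition on $R$ forbids \emph{every} stabbing plane outside $D_\Delta$. Treating colors as atomic objects is what makes this work: since the conflict relation already looks at every plane of each color, $K_\Delta \cap R = \emptyset$ precisely prohibits any outside color with a stabbing plane from being sampled. This gives $\Pr[\Delta \in \VD(R)] = p^{|D_\Delta|}(1-p)^{|K_\Delta|}$ and places the problem squarely inside the Clarkson--Shor framework.

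With the framework set up, the rest is the standard single-cell bound. Write
\[
\Ex[|L_{\D(q)}|] \:=\: \sum_{\Delta \ni q} (|D_\Delta|+|K_\Delta|)\, p^{|D_\Delta|}(1-p)^{|K_\Delta|}.
\]
The $|D_\Delta| = O(1)$ part contributes at most $O(1) \cdot \sum_{\Delta \ni q}\Pr[\Delta \in \VD(R)] = O(1)$, since at most one prism of $\VD(R)$ contains $q$. For the $|K_\Delta|$ part, I would apply the classical ``promote one conflict into the defining set'' identity: each pair $(\Delta,\chi)$ with $\chi \in K_\Delta$ is charged to the augmented configuration that has $\chi$ moved into its defining set, trading a factor $(1-p)$ for a factor $p$ and yielding a sum of size $\frac{1-p}{p} \cdot O(1) = O(1/p)$, again because only $O(1)$ augmented prisms contain $q$. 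Summing the two contributions gives $\Ex[|L_{\D(q)}|] = O(1/p) = O(1)$. The main obstacle is the careful verification of the structural equivalence under color-batched sampling; once that is in place, the arithmetic is entirely routine.
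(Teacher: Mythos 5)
Your setup---treating the color classes as the atomic objects of a Clarkson--Shor sample, so that $\Pr[\Delta\in \mathrm{VD}(R)]=p^{|D_\Delta|}(1-p)^{|K_\Delta|}$ and $\Ex[|L_{\Delta(q)}|]$ becomes a sum over the potential prisms containing $q$---is exactly the framework the paper uses, and your handling of the $|D_\Delta|$ contribution is fine. The gap is in the step that handles the $|K_\Delta|$ contribution. After trading a factor $(1-p)$ for $p$, the quantity you are summing is $\sum_{\Delta\ni q}\sum_{\chi\in K_\Delta}\Pr\bigl[D_\Delta\cup\{\chi\}\subseteq R \mbox{ and } (K_\Delta\setminus\{\chi\})\cap R=\emptyset\bigr]$, i.e., the expected number of ``active'' augmented pairs $(\Delta,\chi)$ with $q\in\Delta$. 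Your justification that this is $O(1)$---``only $O(1)$ augmented prisms contain $q$''---is not available: an active pair $(\Delta,\chi)$ means precisely that $\Delta$ is the cell containing $q$ in $\mathrm{VD}(R\setminus\{\chi\})$ and $\chi\in R$ stabs it, so the active augmented prisms live in \emph{different} decompositions for different $\chi\in R$; they are not interior-disjoint cells of one decomposition, and nothing bounds their number by a constant for a fixed $R$ (in general it can be as large as the number of colors incident or adjacent to the envelope feature above $q$). Moreover, asserting that their \emph{expected} number is $O(1)$ is circular: with $p=1/2$ that expectation equals exactly the sum $\sum_{\Delta\ni q}|K_\Delta|\,p^{|D_\Delta|}(1-p)^{|K_\Delta|}$ you are trying to bound. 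The ``promote one conflict into the defining set'' identity is sound in the classical \emph{global} bound, where the augmented configurations are charged against the total size of the decomposition of a sample; for the single cell containing a fixed query point it does not close the argument by itself.

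The paper closes this step with a resampling comparison instead: introduce a second, sparser sample $R'$ that contains each color with probability $1/4$. Since at most one cell of $\mathrm{VD}(R')$ contains $q$, one gets $1\ge\sum_{\Delta\ni q}(1/4)^{|D_\Delta|}(3/4)^{|L_\Delta|}=\Theta\bigl(\sum_{\Delta\ni q}(3/4)^{|L_\Delta|}\bigr)$, while $\Ex[|L_{\Delta(q)}|]=\Theta\bigl(\sum_{\Delta\ni q}|L_\Delta|(1/2)^{|L_\Delta|}\bigr)$; the elementary inequality $x(1/2)^x=O((3/4)^x)$ then gives the termwise comparison and hence the $O(1)$ bound. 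If you replace your charging step with this two-sample (moment) argument---keeping your color-as-atom structural equivalence, which is correct and is the same observation the paper relies on---your proof goes through.
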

\PAPER{
\begin{proof}
Let $D_\Delta$ denote the set of colors of the planes defining a cell $\Delta$.  Note that $|D_\Delta|=O(1)$.
Let $R'$ be another sample that includes each color class with probability $\frac{1}{4}$.  Then

\begin{eqnarray*}
1&\ge& \Ex[\text{\# of cells of $\VD(R')$ containing $q$}]\\
&=&\sum_{\Delta\ni q}\Pr[\Delta\text{ appears in }\VD(R')]\\
&=&\sum_{\Delta\ni q}\left(\tfrac{1}{4}\right)^{|D_\Delta|}\left(\tfrac{3}{4}\right)^{|L_\Delta|}
\ =\ \Theta\left(\sum_{\Delta\ni q}\left(\tfrac{3}{4}\right)^{|L_\Delta|}\right).
\end{eqnarray*}
On the other hand,
\begin{eqnarray*}
\mathbb{E}[|L_{\Delta(q)}|]&=&\sum_{\Delta\ni q}|L_\Delta|\cdot \mathrm{Pr}[\Delta\text{ appears in }\VD(R)]\\
&=&\sum_{\Delta\ni q}|L_\Delta|\left(\tfrac{1}{2}\right)^{|D_\Delta|}\left(\tfrac{1}{2}\right)^{|L_\Delta|}
\ =\ \Theta\left(\sum_{\Delta\ni q} |L_\Delta|\cdot \left(\tfrac{1}{2}\right)^{|L_\Delta|}\right).
\end{eqnarray*}
Therefore, $\mathbb{E}[|L_{\Delta(q)}|]=O(1)$.
\end{proof}
}

We conclude:
\begin{theorem}\label{thm:one2}
For $n$ colored points in $\R^3$, there is a randomized Monte Carlo data structure with $O(n\polylog n)$ preprocessing time and $O(n)$ space that decides whether the number of colors in a query halfspace is exactly $1$ in $O(\log n)$ time; if the actual answer is true, the algorithm returns ``yes'' with probability $\Omega(1)$, else it always returns ``no''.
\end{theorem}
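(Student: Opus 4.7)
The plan is to work in the dual and use a colored variant of random sampling, exactly along the lines sketched in the preceding paragraphs. After dualization the input is a set of $n$ colored planes and a query is a point $q$; we must decide, with one-sided error, whether exactly one color appears among the planes below $q$. The approach is: sample a random subset $R$ of planes by including each \emph{color class} independently with probability $\tfrac12$, precompute the lower envelope $\LE(R)$ and its vertical decomposition $\VD(R)$, attach small color ``conflict lists'' to its cells, and keep, for each color separately, an uncolored range emptiness structure.

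For preprocessing I would compute $\VD(R)$ and, for every cell $\Delta\in\VD(R)$, produce the set $L_\Delta$ of colors of planes intersecting $\Delta$, truncated at a sufficiently large constant~$c$; if truncation kicks in I mark $\Delta$ as \emph{bad}. The truncated $L_\Delta$ can be read off in $O(\polylog n)$ time by calling the colored range-reporting data structure of Theorem~\ref{thm:report} at each of the $O(1)$ vertices of $\Delta$, for $O(n\polylog n)$ time total and $O(n)$ space. In parallel I build, for each color class, an uncolored range emptiness (planar point location) structure on its dual planes, which answers ``does this color have a plane below $q$?'' in $O(\log n)$ time and has $O(n)$ total space and $O(n\polylog n)$ total build time.

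A query $q$ is answered by locating the cell $\Delta(q)\in\VD(R)$ in $O(\log n)$ time, returning ``no'' if $\Delta(q)$ is missing or bad, and otherwise testing each of the $\le c$ colors of $L_{\Delta(q)}$ against its range-emptiness structure and returning ``yes'' iff exactly one succeeds. The one-sided error is immediate, because $L_{\Delta(q)}$ contains every color with a plane below $q$ (any such plane lies below some vertex of $\Delta(q)$), so a ``yes'' certificate exhibits a single color with a plane below $q$ and confirms $k=1$. Conversely, if $k=1$ with witness color $\chi$, the algorithm succeeds whenever (a) $\chi\notin R$, so $q$ lies below $\LE(R)$ and $\Delta(q)$ exists, and (b) $\Delta(q)$ is not bad. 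Event (a) has probability exactly $\tfrac12$ by construction, and given (a) the cell is ``not bad'' whenever $|L_{\Delta(q)}|\le c$.

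The main obstacle is proving $\Ex[\,|L_{\Delta(q)}|\,]=O(1)$ for each fixed $q$; Markov's inequality then controls (b) up to conditioning on~(a). I would prove this expectation bound by the standard Clarkson--Shor two-sample trick, adapted to color-wise sampling. Let $D_\Delta$ denote the set of colors of the $O(1)$ planes that \emph{define} the cell $\Delta$; then $|D_\Delta|=O(1)$, and because color classes are sampled independently, a fixed cell $\Delta$ of the canonical vertical decomposition appears in the lower envelope decomposition of a $p$-sample iff all of its defining colors are in and all of its conflicting colors are out, which has probability $p^{|D_\Delta|}(1-p)^{|L_\Delta|}$ (the defining and conflicting color sets being disjoint in any contributing configuration). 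Comparing $p=\tfrac14$ against $p=\tfrac12$ and using that the expected number of cells of $\VD(R')$ containing $q$ is at most $1$ gives $\sum_{\Delta\ni q}(3/4)^{|L_\Delta|}=O(1)$, and then $\Ex[\,|L_{\Delta(q)}|\,] = \Theta\bigl(\sum_{\Delta\ni q} |L_\Delta|(1/2)^{|L_\Delta|}\bigr)=O(1)$ because $x(1/2)^x=O((3/4)^x)$. The only conceptual delicacy is confirming that the colored sampling model is compatible with Clarkson--Shor --- i.e.\ that $|D_\Delta|$ really is bounded in \emph{colors} and not just in planes --- which is clear since every cell of a planar lower-envelope vertical decomposition is determined by a constant number of planes. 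Combining everything yields $\Omega(1)$ success probability, $O(\log n)$ query time, $O(n)$ space, and $O(n\polylog n)$ preprocessing time.
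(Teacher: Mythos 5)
Your proposal is correct and follows essentially the same route as the paper: dualize, sample color classes independently with probability $\tfrac12$, store truncated color conflict lists of the cells of the vertical decomposition of the sample's lower envelope together with per-color range-emptiness structures, and bound $\Ex[|L_{\Delta(q)}|]=O(1)$ by the two-sample Clarkson--Shor comparison of a $\tfrac14$-sample against the $\tfrac12$-sample. The only (harmless) difference is that you pin down a specific colored reporting structure for building the conflict lists where the paper cites ``previous results'' generically.
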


\subparagraph*{Remarks.}
The method can be viewed as a variant of Chan's random-sampling-based method for
uncolored 3D halfspace range reporting~\cite{Chan00}.
In the uncolored setting, errors can be completely avoided by replacing lower envelopes of samples with
\emph{shallow cuttings}~\cite{Matousek92}, but it is unclear how to do so in the colored setting.

\subsection{The general case}

Finally, to solve the general problem, we use a variant of Theorem~\ref{thm:reduction} that tolerates one-sided errors in the given $k=1$ data structure.

\begin{theorem}\label{thm:reduction2}
Suppose that for $n$ colored points, there is a randomized Monte Carlo data structure with $P(n)$ (expected) preprocessing time and $S(n)$ space that decides whether the number of colors in a query range is exactly $1$ in $Q_1(n)$ time;
if the actual answer is true, the algorithm returns ``yes'' with probability $\Omega(1)$, else it always returns ``no''.
In addition, the data structure can decide whether the range is empty, and if not, report one point, in $Q_0(n)$ time (without errors). 
Then there is a randomized Las Vegas data structure with $O(P(n)\log n)$ expected preprocessing time and $O(S(n)\log n)$ space that can report all $k$ distinct colors in a query range in $O(k(Q_0(n)+Q_1(n)))$ expected time, assuming that $P(n)/n$ and $S(n)/n$ are nondecreasing.
\end{theorem}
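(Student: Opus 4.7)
The plan is to mimic the construction and argument of Theorem~\ref{thm:reduction} almost verbatim, adjusting only the query procedure to absorb the one-sided error and refining the node count to keep the query time linear in $k$.  I would build the same randomized binary tree: at each internal node every color class is independently assigned to one of the two children, until each leaf carries a single color class; at every node I store the given Monte Carlo $k=1$ structure together with the range-emptiness/reporting structure built on the points whose colors appear in that node.  Because $P(n)/n$ and $S(n)/n$ are nondecreasing, summing across the $O(\log n)$ levels gives $O(P(n)\log n)$ expected preprocessing time and $O(S(n)\log n)$ space, exactly as in Theorem~\ref{thm:reduction}.

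The query routine is the natural Las Vegas variant.  At each visited node, first call the emptiness test; if the node is empty, stop.  If the node is a leaf (a single color class), emit that color directly.  Otherwise run the $k=1$ tester: on a ``yes'' answer, retrieve a witness point via the $Q_0$-time reporting structure and output its color; on a ``no'' answer, recurse into both children.  Since a ``yes'' is never wrong, every reported color is genuine; and every color $\chi$ in the range is eventually output, because $\chi$'s root-to-leaf path remains visited at least down to $\chi$'s own leaf, where we report unconditionally.

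The heart of the analysis is to show that the expected number of visited nodes is $O(k)$, so that the total query cost is $O(k(Q_0(n)+Q_1(n)))$.  I would partition the visited non-empty nodes into \emph{Type~A} (containing $\geq 2$ of the $k$ query colors) and \emph{Type~B} (containing exactly one); the total visited count is $O(1)$ plus twice the number of nodes from which the algorithm recurses, and it recurses exactly from Type~A nodes and from Type~B nodes at which the tester failed.  For Type~A the calculation from Theorem~\ref{thm:reduction} applies verbatim: the expected count at level $i$ is at most $\binom{k}{2}/2^i$, and $\sum_i \min\{2^i,\,k^2/2^i\}=O(k)$.  For Type~B, the key observation is that once $\chi$ is alone in some ancestor it remains alone in every descendant, so the visited singleton nodes for $\chi$ form a contiguous downward path whose length is at most $1$ plus the number of independent tester failures before the first success; since the tester succeeds with constant probability on true singletons, this length is $O(1)$ in expectation, and summing over all $k$ colors contributes another $O(k)$.

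The main subtlety I would guard against is the potential correlation between tester outcomes and the random partition: the tester at each node is prepared once with fresh randomness, independently of the query and of every other node's randomness, so conditioned on any fixed partition the tester events along a singleton path are mutually independent Bernoulli trials with constant success probability, which is precisely what the geometric-tail bound requires.
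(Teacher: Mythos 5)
Your proposal is correct and takes essentially the same approach as the paper: the same randomized color-splitting tree and query procedure, the pairs-in-bins bound for nodes containing at least two query colors, and a geometric-tail argument for the singleton (``bad'') nodes --- the paper charges each such path to its lowest non-singleton ancestor while you charge it to its unique color, which is an equivalent accounting. Your explicit treatment of leaves (reporting the color unconditionally) and of the independence of the testers' randomness fills in details the paper leaves implicit but does not change the argument.
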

\begin{proof}
We use the same approach as in the proof of Theorem~\ref{thm:reduction}.  In the query algorithm, if the range is empty or the $k=1$ structure returns ``yes'', we are done; otherwise, we recursively query both parts.

To analyze the query time, we say that a node in the recursion tree is \emph{bad} if the number of colors in the query range at the node is exactly~1.  Our earlier analysis shows that the expected total number of non-bad nodes visited is $O(k)$.  However, because of the possibility of one-sided errors, the query algorithm may examine some bad nodes.
For each bad node $v$ visited by the query algorithm, we charge $v$ to its lowest ancestor $u$ that is not bad.  Then for a fixed node $u$, we may have up to two paths of nodes charged to $u$.
The expected number of nodes charged to a fixed node $u$ is at most $O(\sum_i (1-\Omega(1))^i)=O(1)$.  We conclude that the expected total number of nodes visited is $O(k)$.
\end{proof}

Combining Theorems~\ref{thm:one2} and~\ref{thm:reduction2} yields:

\begin{theorem}\label{thm:report2}
For $n$ colored points in $\R^3$, there is a randomized Las Vegas data structure with
$O(n\polylog n)$ expected preprocessing time and $O(n\log n)$ space that can report all $k$ distinct colors in a query halfspace in
$O(k\log n)$ expected time.
\end{theorem}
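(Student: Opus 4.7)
The plan is to obtain Theorem~\ref{thm:report2} as a direct instantiation of Theorem~\ref{thm:reduction2}, with the Monte Carlo $k=1$ data structure from Theorem~\ref{thm:one2} plugged in as the base structure. Since both building blocks are already proved, what remains is mostly bookkeeping: matching parameters and supplying the $k=0$ primitive.

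First I would unpack the hypotheses. Theorem~\ref{thm:one2} provides a $k=1$ tester with $P(n)=O(n\polylog n)$ expected preprocessing, $S(n)=O(n)$ space, $Q_1(n)=O(\log n)$ query time, and exactly the one-sided error profile demanded by Theorem~\ref{thm:reduction2}: ``no'' is always correct, and ``yes'' is returned with at least constant probability when $k=1$ actually holds. Both $P(n)/n$ and $S(n)/n$ are nondecreasing, so the monotonicity condition is satisfied. The only other ingredient is a $k=0$ primitive for halfspace range emptiness with one-point reporting; dualizing points to planes and the halfspace query to a vertical line, this becomes point location on the $xy$-projection of the lower envelope of the planes, which any standard static planar point location structure handles in $Q_0(n)=O(\log n)$ time using $O(n)$ space.

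Plugging these parameters into Theorem~\ref{thm:reduction2} then gives expected preprocessing $O(P(n)\log n)=O(n\polylog n)$, space $O(S(n)\log n)=O(n\log n)$, and expected query time $O(k(Q_0(n)+Q_1(n)))=O(k\log n)$, which exactly matches the statement of Theorem~\ref{thm:report2}. The only genuinely delicate point---turning a Monte Carlo $k=1$ tester into a Las Vegas reporter without inflating the $k$-dependence by a logarithmic factor---was already absorbed into Theorem~\ref{thm:reduction2} by the charging argument on bad recursion-tree nodes, so no new obstacle arises at this assembly step.
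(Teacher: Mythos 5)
Your proposal is correct and follows exactly the paper's route: Theorem~\ref{thm:report2} is obtained by plugging the Monte Carlo $k=1$ tester of Theorem~\ref{thm:one2} into the error-tolerant reduction of Theorem~\ref{thm:reduction2}, with the $k=0$ primitive supplied by standard planar point location on the projected lower envelope. The parameter bookkeeping ($P$, $S$, $Q_0$, $Q_1$ and the monotonicity conditions) checks out, so nothing further is needed.
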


\section{Colored 3D \SOCG{Dominance} \PAPER{Orthogonal} Range Reporting}\label{sec:orth}

Both methods can be adapted to solve the colored 3D dominance range reporting problem: here, we want to report the distinct colors of all points inside a 3-sided range of the form $(-\infty,q_1]\times (-\infty,q_2]\times (-\infty,q_3]$.  Equivalently, we can map input points $(p_1,p_2,p_3)$ to orthants $[p_1,\infty)\times [p_2,\infty)\times [p_3,\infty)$, and the problem becomes reporting the distinct colors among all orthants containing a query point $q=(q_1,q_2,q_3)$.
By replacing values with their ranks, we may assume that all coordinates are in $\{1,\ldots,n\}$ (in a query, an initial predecessor search to reduce to rank space requires
an additional $O(\log\log U)$ cost by van Emde Boas trees).  We assume the standard word-RAM model.

In the first method, the combinatorial lemmas on colored randomized incremental constructions can be extended to the union of the orthants (a ``staircase polyhedron'').  In fact, by a known transformation involving an exponentially spaced grid~\cite{ChanLP11,PachT11}, orthants can be mapped to halfspaces and a union of orthants can be mapped to a halfspace intersection, or in the dual, a 3D convex hull.
For the $k=1$ structure, we not only randomly permute the color classes but also randomly permute the points inside each color class, and maintain the union of the orthants as points are inserted one by one.  Instead of using persistence, we reduce to static 3D point location: we insert in reverse order, and as a new orthant is inserted, we create a region for the newly added portion of the union (i.e., the new orthant minus the old union).
Identifying the smallest color of the orthants containing $q$ (to solve subproblem (i)) reduces to locating the region containing $q$.  The expected total size of these regions is $O(n\log n)$ by Lemma~\ref{lemma:CH:refined}; we can further subdivide each of these regions into boxes (by taking a vertical decomposition), without asymptotically increasing the total size.
Known results on orthogonal point location in a 3D subdivision of (space-filling) boxes~\cite{deberg1995two,ChaICALP18} then give $O((\log\log n)^2)$ query time and space linear in the size of the subdivision.
Thus, the final data structure for the general case has
$O(n\log^2n)$ space and $O(\log\log U + k(\log\log n)^2)$ expected query time.

In the second method, we replace lower envelopes with unions of orthants.  The only main change is that planar point location queries for orthogonal subdivisions now cost $O(\log\log n)$ time by Chan's result~\cite{chan2013persistent} instead of $O(\log n)$.
Thus, the final data structure has
$O(n\log n)$ space and $O(\log\log U + k\log\log n)$ expected time.

\begin{theorem}\label{thm:dom}
For $n$ colored points in $\R^3$, there is a randomized Las Vegas data structure with
$O(n\polylog n)$ expected preprocessing time and $O(n\log n)$ space that can report all $k$ distinct colors in a query dominance range in
$O(\log\log U + k\log\log n)$ expected time.
\end{theorem}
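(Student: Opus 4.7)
The plan is to adapt the second method (Section~\ref{sec:second}) to the dominance setting while handling the word-RAM subtleties. First I would reduce coordinates to rank space $\{1,\ldots,n\}$; the only residual cost is a predecessor search on each of $q_1,q_2,q_3$ via van Emde Boas trees, which accounts for the $O(\log\log U)$ additive term. Then I would view each input point $p=(p_1,p_2,p_3)$ as the orthant $O_p=[p_1,\infty)\times[p_2,\infty)\times[p_3,\infty)$, so that reporting the distinct colors in the dominance range $(-\infty,q_1]\times(-\infty,q_2]\times(-\infty,q_3]$ becomes reporting the distinct colors of orthants containing the query point $q$.

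For the $k=1$ case, I would mimic Theorem~\ref{thm:one2}. Take a random sample $R$ of the color classes, including each class independently with probability $\tfrac{1}{2}$, and form the union $U(R)$ of the orthants of $R$. I would refine $U(R)$ into a subdivision of $O(n)$ axis-aligned boxes via a standard orthogonal vertical decomposition of the staircase polyhedron. For each box $\Delta$, store the color conflict list $L_\Delta$ if $|L_\Delta|\le c$, otherwise mark $\Delta$ as bad; in addition, for each color class, store a 3D dominance emptiness structure that answers in $O(\log\log n)$ time in rank space. A Clarkson--Shor argument entirely analogous to Lemma~\ref{lem:sample} gives $\Ex[|L_{\Delta(q)}|]=O(1)$, so $\Delta(q)$ is bad with arbitrarily small constant probability for large enough $c$. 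To query at $q$, locate the cell $\Delta(q)$ using Chan's orthogonal point location for 3D space-filling box subdivisions in $O(\log\log n)$ time; return ``no'' if $q\notin U(R)$ or $\Delta(q)$ is bad, otherwise probe each of the $\le c$ colors in $L_{\Delta(q)}$ against its dominance emptiness structure. This gives $Q_0(n)=Q_1(n)=O(\log\log n)$ with $\Omega(1)$ one-sided success probability, and $O(n)$ space. Preprocessing the conflict lists is done by answering $O(n)$ colored dominance reporting queries at the box vertices (recursively, or by the exponentially spaced grid transformation of~\cite{ChanLP11,PachT11} that maps orthants to halfspaces, so the hull machinery of Section~\ref{sec:second} applies verbatim), giving $O(n\polylog n)$ expected preprocessing.

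Finally, I would plug this $k=1$ structure into Theorem~\ref{thm:reduction2}, which upgrades it into a Las Vegas structure with $O(n\log n)$ space, $O(n\polylog n)$ expected preprocessing, and $O(k(Q_0(n)+Q_1(n)))=O(k\log\log n)$ expected query time; combined with the initial $O(\log\log U)$ rank-space cost, this matches the claimed bound.

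The main obstacle will be verifying that the staircase polyhedron $U(R)$ admits a decomposition into $O(|R|)$ axis-aligned boxes on which Chan's orthogonal point location is applicable, and that the Clarkson--Shor analysis on conflict list sizes survives the replacement of a lower envelope of planes by a union of orthants. Both can be handled directly, but the cleanest route is to compose with the exponentially spaced grid transformation so that the combinatorics reduce to those of 3D convex hulls, after which Lemma~\ref{lem:sample} applies without modification and the $O(n)$-cell bound is inherited from the hull complexity.
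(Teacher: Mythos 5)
Your proposal follows the paper's own route for this theorem: it is exactly the second method of Section~\ref{sec:second} with lower envelopes replaced by unions of orthants, rank-space reduction accounting for the $O(\log\log U)$ term, Clarkson--Shor conflict lists as in Lemma~\ref{lem:sample}, and Theorem~\ref{thm:reduction2} finishing the general case. The one small correction is that the cell location should be done as \emph{planar} orthogonal point location on the $xy$-projection of the vertical decomposition of the staircase, which costs $O(\log\log n)$; point location in a general 3D orthogonal box subdivision costs $O((\log\log n)^2)$, which is what the paper's first-method variant pays.
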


\SOCG{
We can extend the result of Theorem~\ref{thm:dom} to orthogonal ranges with more sides or to $d>3$ dimensions.  See the full paper for more details.
}

\PAPER{

From a colored dominance (i.e., 3-sided) range reporting structure, one can obtain colored reporting structures for orthogonal ranges with more sides by using a standard range-tree-based transformation which does not increase the query time but increases the space by a logarithmic factor per side added (i.e., a $\log^3 n$ factor for general 6-sided ranges).  Chan and Nekrich~\cite{ChanN20} recently gave a method that produces better space bounds.  Combining methods yields:

\begin{corollary}
For $n$ colored points in $\R^3$, there is a randomized Las Vegas data structure with
$O(n\log^{1+\eps} n)$ space that can report all $k$ distinct colors in a query 5-sided orthogonal range in
$O(\log\log U + k\log\log n)$ expected time.  For general 6-sided orthogonal ranges, the space bound increases to $O(n\log^{2+\eps}n)$.
\end{corollary}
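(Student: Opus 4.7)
The plan is to combine the 3D colored dominance structure of Theorem~\ref{thm:dom} with a space-efficient range-tree transformation of Chan and Nekrich~\cite{ChanN20} to handle the extra bounded sides. A classical black-box reduction uses, for each additional bounded side, a one-dimensional range tree on that coordinate and stores a copy of the base dominance structure at every canonical node. The query time is preserved but each added side multiplies space by $\Theta(\log n)$; starting from the $O(n\log n)$ structure of Theorem~\ref{thm:dom}, this would give $O(n\log^3 n)$ for 5-sided and $O(n\log^4 n)$ for 6-sided ranges, which is too weak by a logarithmic factor.

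To save a $\log n$ factor per added side, I would invoke Chan-Nekrich's higher-fanout range tree: use a tree of degree $B = \log^\delta n$ (hence depth $O(1/\delta)$) and augment each node with an auxiliary color-lookup layer, so that each added side costs only a $\log^{O(\delta)} n$ factor in space instead of $\log n$. Applied twice on top of Theorem~\ref{thm:dom}, with $\delta$ chosen so that $\log^{O(\delta)} n \le \log^{\eps/2} n$, this yields an $O(n\log^{1+\eps} n)$-space structure for 5-sided ranges. For the 6-sided case I would then lay one standard binary range tree on the last coordinate, which pays a full $\log n$ factor only once, giving the claimed $O(n\log^{2+\eps} n)$ bound.

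The main technical obstacle is preserving the $O(\log\log U + k\log\log n)$ query time when several range trees are stacked: each of the $k$ colors in the query range could a priori be rediscovered at $\Theta(\log_B n)$ canonical subtrees per level, which would blow up the $k\log\log n$ term by extra polylog factors. To avoid this I would reuse Chan-Nekrich's color-deduplication mechanism, which randomly distributes colors among children in the same spirit as Theorem~\ref{thm:reduction} and charges rediscoveries to a persistent color tree so that, in expectation, each color contributes only $O(\log\log n)$ total work across all canonical nodes visited. Since this mechanism touches only the color dimension and treats the dominance base as a black box, it composes with Theorem~\ref{thm:dom} without modification. What remains is bookkeeping: verify that the expected space and expected query time come out as stated under the chosen parameters, and conclude the corollary.
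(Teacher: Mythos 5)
Your overall plan---apply a space-efficient Chan--Nekrich-style transformation on top of Theorem~\ref{thm:dom} to add the two extra sides, then pay one full $\log n$ factor with a binary range tree for the sixth side---is the same route the paper takes, but two of your steps do not hold as stated. First, a fanout-$B$ range tree with an ``auxiliary color-lookup layer'' does not by itself give a $\log^{O(\delta)}n$ space factor per added side: in a $B$-ary tree each point appears in $\Theta(B)$ prefix/suffix structures per level over $\Theta(\log_B n)$ levels, so the naive overhead is about $B\log_B n\approx\log^{1+\delta}n$ per side (this is exactly why the paper's higher-dimensional corollary pays $b^{O(1)}\log n$ per dimension and an extra $\log_b n$ in query time). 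The $\log^{\eps}n$-overhead transformation from 3-sided to 5-sided is precisely Chan and Nekrich's Theorem~3.6 in \cite{ChanN20}, a nontrivial recursive-grid/bit-packing construction; the paper simply invokes it as a black box rather than re-deriving it with a plain high-fanout tree.

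Second, and more importantly, when that transformation is fed a base structure with query time $O(\log\log U+k\log\log n)$, its output query time acquires an extra additive term of the form $O\bigl(\frac{k\log^{O(1)}k}{\log n}\bigr)$, coming from the per-color deduplication work across canonical pieces. Your claimed ``color-deduplication mechanism'' that randomly distributes colors among children and charges each color only $O(\log\log n)$ total work is not something the cited machinery provides---Chan--Nekrich's 5-sided transformation is deterministic, and the random splitting of colors you have in mind is the paper's Theorem~\ref{thm:reduction}, which serves the reduction to the $k=1$ case and does not compose in the way you assert. Without a proof of your deduplication claim, the bound $O(\log\log U+k\log\log n)$ is not established. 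The paper closes this gap with a case distinction on $k$: for $k\le\log n$ the middle term $\frac{k\log^{O(1)}k}{\log n}$ is absorbed into $O(k)$, while for $k>\log n$ one abandons the new structure entirely and switches to Chan--Nekrich's existing 5-sided structure with $O(n\log^{4/5+\eps}n)$ space and $O(\frac{\log n}{\log\log n}+k)=O(k)$ query time. Your final step (binary range tree on the last coordinate for 6-sided ranges, costing one $\log n$ factor in space) does match the paper.
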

\begin{proof}
Starting with a colored dominance structure with $O(n)$ space and $O(\frac{\log n}{\log\log n} +k)$ query time,
Chan and Nekrich~\cite[Theorem~3.6]{ChanN20} obtained a data structure for the 5-sided case with $O(n\log^\eps n)$ space and
$O(\frac{\log n}{\log\log n} + \frac{k\log^{O(1)}k}{\log n} + k)$ query time.  Starting with the new colored dominance structure with $O(n\log n)$ space and $O(\log\log U +k\log\log n)$ expected query time, it can be checked that the new space bound is $O(n\log^{1+\eps}n)$ and the new expected query time bound is $O(\log\log U + \frac{k\log^{5+\eps}k}{\log n} + k\log\log n)$.  If $k\le\log n$, the middle term disappears.

On the other hand, if $k>\log n$, we can switch to Chan and Nekrich's other data structure with $O(n\log^{4/5+\eps}n)$ space and $O(\frac{\log n}{\log\log n} + k)$ query time, which is $O(k)$.

As mentioned, we can transform a 5-sided structure into a 6-sided one by using range trees, with an extra logarithmic factor in space.
\end{proof}

In higher constant dimensions, one can use a $b$-ary range tree, which increases the space by a $b^{O(1)}\log n$ factor per dimension and query time by a $\log_b n$ factor per dimension.  Setting $b=\log^{O(\eps)}n$ (and replacing the initial predecessor search cost $O(\log\log U)$ with just $O(\log n)$) then yields:

\begin{corollary}
For $n$ colored points in $\R^d$ for a constant~$d>3$, there is a randomized Las Vegas data structure with
$O(n\log^{d-1 +\eps} n)$ space that can report all $k$ distinct colors in a query orthogonal range in
$O(k (\frac{\log n}{\log\log n})^{d-3}\log\log n)$ expected time.
\end{corollary}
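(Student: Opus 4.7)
The plan is to bootstrap from the preceding corollary (the 3D 6-sided case with $O(n\log^{2+\eps}n)$ space and $O(\log\log U + k\log\log n)$ expected query time) and cascade a $b$-ary range tree once per additional coordinate, $d-3$ times in total. For each extra coordinate, I would sort on that coordinate, build a balanced $b$-ary tree, and at every internal node store a recursively built colored $(d-1)$-dimensional structure on the canonical subset of points in its subtree, with the extra coordinate now pruned away. A query range on the extra coordinate decomposes into $O(\log_b n)$ canonical subtrees, each of which is queried recursively.

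Per the bookkeeping stated in the paragraph immediately preceding the corollary, each added dimension multiplies space by a $b^{O(1)}\log n$ factor and expected query time by a $\log_b n$ factor, so after $d-3$ applications the space is $O\bigl(n\log^{2+\eps}n\cdot (b^{O(1)}\log n)^{d-3}\bigr)$ and the expected query time is $O\bigl((\log\log U + k\log\log n)(\log_b n)^{d-3}\bigr)$. Choosing $b = \log^{\eps'}n$ with $\eps' = \Theta(\eps/(d-3))$, the space simplifies to $O(n\log^{d-1+\eps}n)$ (after renaming the constant hidden in $\eps$), and the query-time blow-up becomes $(\log_b n)^{d-3} = O((\log n/\log\log n)^{d-3})$. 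Replacing the $d-3$ cascaded $O(\log\log U)$ predecessor lookups by a single top-level rank-space reduction (one binary search per coordinate, $O(\log n)$ in total), we arrive at expected query time $O(\log n + k\log\log n\cdot (\log n/\log\log n)^{d-3})$; since $d\ge 4$ and $k\ge 1$ make the second term at least $\log n$, the additive $O(\log n)$ is absorbed, yielding the claimed bound.

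The main obstacle I expect is the standard pitfall of colored range trees: a single color may appear in several of the $O(\log_b n)$ canonical subtrees queried at a given level, so a naive aggregation would report each color up to $(\log_b n)^{d-3}$ times and destroy linear-in-$k$ output sensitivity. To avoid this I would employ the standard color-representative device, in which at each $b$-ary tree node we retain, for each color present in its canonical subset, only one representative point (fixed by a canonical tie-breaking rule); this guarantees that each distinct color is served by exactly one of the $O(\log_b n)$ queried subtrees per level, so the $k$ colors trigger $O(k)$ deeper recursive calls per level rather than $O(k\log_b n)$. This bookkeeping, which underlies the $b^{O(1)}$ factor in the per-dimension space overhead, is a well-known ingredient of colored range-tree constructions (see, e.g., \cite{GuptaSURVEY}), and once it is in place the space and query-time analysis above goes through without further change.
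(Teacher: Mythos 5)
Your main cascade is exactly the paper's (one-sentence) argument: apply a $b$-ary range tree once per extra coordinate, pay a $b^{O(1)}\log n$ factor in space and a $\log_b n$ factor in query time per dimension, set $b=\log^{\Theta(\eps)}n$, and replace the $O(\log\log U)$ predecessor cost by a single $O(\log n)$ rank-space reduction, which is absorbed since $d>3$ and $k\ge 1$. That part of your write-up is fine and matches the intended proof.

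The problem is your last paragraph. First, the ``obstacle'' you identify is not one: the claimed bound $O(k(\frac{\log n}{\log\log n})^{d-3}\log\log n)$ is precisely the cost of each color being discovered at up to $\log_b n$ canonical nodes per level of dimension reduction --- that duplication is what the $(\log_b n)^{d-3}$ factor in the $k$-term pays for, and duplicates can be removed at the end in $O(1)$ time per reported occurrence (e.g., marking an array indexed by color in rank space). Your own accounting in the second paragraph (``query time multiplied by $\log_b n$ per dimension'') already assumes this duplication-tolerant charging; if your device really reduced the work to $O(k)$ recursive calls per level, the $(\log_b n)^{d-3}$ factor would be unnecessary, so the paragraph is internally inconsistent. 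Second, and more seriously, the fix you propose is unsound: retaining at each node only one representative point per color of its canonical subset breaks correctness, because the representative's coordinates in the remaining $d-1$ (or 3) dimensions need not lie in the query range even when another point of that color in the same canonical subset does, so queries would miss colors. The standard ``representative'' tricks for colored reporting (e.g., the predecessor-of-same-color transformation of Gupta--Janardan--Smid) are coordinate transformations tailored to specific query types, not a license to discard all but one point per color per canonical node; likewise, the $b^{O(1)}$ space factor in the $b$-ary tree comes from storing structures for contiguous ranges (or prefixes/suffixes) of children at each node, not from color pruning. Deleting that paragraph and simply storing the full canonical subsets, accepting up to $\log_b n$ reports per color per level, and deduplicating at the end leaves you with a correct proof identical to the paper's.
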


}







\section{Colored 2D Orthogonal Type-2 Range Counting}\label{sec:type2}

\newcommand{\clist}{\mathrm{clist}}
\newcommand{\cE}{{\cal E}}
\newcommand{\NULL}{\textit{NULL}}
\newcommand{\Dom}{\textit{Dom}}

Our solution for orthogonal type-2 range counting is described in stages. First we consider the \emph{capped} variant of type-2 range counting. A capped query returns the correct answer if the number of colors $k$ in the query range does not exceed $\log^3 n$. If $k> \log^3 n$, the answer to the capped query is $\NULL$. Capped queries in the case when the query range is bounded on $2$ sides are considered in Section~\ref{sec:dominance}. 
\SOCG{%
We extend the solution to $3$-sided and $4$-sided queries,
as well as for the case when the number of colors can be arbitrarily large, in the full paper.
}%
\PAPER{%
We extend the solution to $3$-sided and $4$-sided queries in Sections~\ref{sec:3sided} and~\ref{sec:4sided} respectively. Finally we describe the solution for the case when the number of colors can be arbitrarily large in Section~\ref{sec:general}.
}

\subsection{Capped $2$-Sided Queries}
\label{sec:dominance}
 With foresight, we will solve the more general weighted version of this problem. Each point in $S$ is also assigned a positive  integer \emph{weight}.  For a $2$-sided query range $Q$, we want to identify all colors that occur in $Q$; for each color we report the total weight of all its occurrences in $Q$.

We will denote by $n$ the total weight of all points in $S$;
we will denote by $m$ the total number of points in $S$. 
\PAPER{The reason for this change of notation will be clear in Section~\ref{sec:3sided}.
}
We prove the following result:
\begin{lemma}
  \label{lemma:2dcapped-domin}
  Let $S$ be the set of $m$ points in $\R^2$ with total weight $n\ge m$. There exists a data structure that uses $O(m(\log \log n)^2)$ words of space and supports $2$-sided capped type-2 counting queries in $O(\log n/\log \log n + k\log\log n)$ time. 
\end{lemma}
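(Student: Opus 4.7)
The plan is to reduce the 2-sided capped type-2 problem to a persistent 1D problem along the $y$-axis, and then to implement the 1D structure as a bit-packed Alstrup--Brodal--Rauhe recursive grid on the $x$-axis. After rank-reducing both coordinates so $S \subset \{1,\ldots,m\}^2$, I would sort $S$ by decreasing $y$ and sweep, persistifying every update. At any snapshot $\{p \in S : p_y \le q_2\}$ and for every currently active color $\chi$, maintain exactly one \emph{representative} for $\chi$: the active point of color $\chi$ with the smallest $x$-coordinate, augmented with the total weight of all active points of color $\chi$. A query $(q_1,q_2)$ then amounts to jumping to the snapshot at $y=q_2$ and reporting all representatives with $x \le q_1$ together with their stored weights, returning \textit{NULL} as soon as more than $\log^3 n$ outputs are produced.

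A key observation is that each $y$-insertion causes only $O(1)$ \emph{structural} edits to the chain of representatives: either a new representative is created (the inserted point's color was previously inactive), or the existing representative of that color is replaced by the new point (when the new $x$ is smaller), or only the representative's cumulative weight is incremented. Persistence is supplied by Dietz's technique~\cite{Dietz89a}; combined with the $O(\log\log n)$ ephemeral cost of a single update in the underlying search structure, this gives $O((\log\log n)^2)$ amortised persistent-space cost per insertion, for a total of $O(m(\log\log n)^2)$ words, which matches the space bound.

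To enumerate the representatives with $x \le q_1$ at the chosen snapshot in amortised $O(\log\log n)$ per output, I would implement the chain via an Alstrup--Brodal--Rauhe recursive grid of arity $\Theta(\log^\eps m)$ and depth $O(\log m/\log\log m)$ over $\{1,\ldots,m\}$, storing at every internal node a bit-packed summary recording which of its children currently host a representative. A dominance descent locates the smallest representative $\le q_1$ in $O(\log m/\log\log m) = O(\log n/\log\log n)$ word steps, furnishing the first term of the target bound. Word-level predecessor operations then walk the subsequent representatives in left-to-right order at $O(\log\log n)$ amortised word operations each, and decoding a representative's color identifier (an $O(\log m)$-bit field) and cumulative weight (an $O(\log n)$-bit integer) costs only $O(1)$ word operations apiece. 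Early termination after $\log^3 n$ emissions caps the output work at $O(k\log\log n)$, and snapshot location at $y=q_2$ is a predecessor query on the $O(m)$ sweep events, itself absorbed into the first term.

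The main obstacle will be the interaction of the three layers---bit-packed summaries inside each grid node, the recursive grid of depth $O(\log m/\log\log m)$, and Dietz persistence---in keeping the per-insertion persistent-space cost at $O((\log\log n)^2)$ rather than the naive $O((\log m/\log\log m) \cdot \log\log n)$. The key is that each structural chain edit modifies only $O(1)$ slots in each summary word along the affected root-to-leaf path, so Dietz's node-copying rewrites only $O(1)$ words per level; since the ephemeral update itself behaves like an $O(\log\log n)$-operation on a vEB-like sub-tree inside the grid, the total persistent rewrite per insertion collapses to $O((\log\log n)^2)$. A secondary subtlety is making sure the version identifier returned by the snapshot predecessor coherently directs the grid descent through the correct persistent pointers, which is handled by versioning every pointer and every bit-packed summary word inside the grid under the same Dietz budget.
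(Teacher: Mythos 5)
Your reduction breaks at the very first step, before any of the machinery about persistence or bit-packed grids comes into play. You define the representative of an active color $\chi$ as its minimum-$x$ active point ``augmented with the total weight of all active points of color $\chi$,'' and you answer a query $(q_1,q_2)$ by reporting every representative with $x\le q_1$ together with that stored weight. The stored weight aggregates over \emph{all} points of color $\chi$ with $y\le q_2$, including those with $x>q_1$, whereas a type-2 query must return the weight of color $\chi$ restricted to $[0,q_1]\times[0,q_2]$. So your structure correctly identifies \emph{which} colors occur in the range (a color occurs iff its min-$x$ active point has $x\le q_1$), but the counts it reports are wrong whenever a color has active points on both sides of $x=q_1$. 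This is not a repairable detail: the correct per-color count is a genuine two-dimensional dominance aggregate that varies with $q_1$, so it cannot be attached as a single precomputed number to one representative per color per snapshot. Your sweep-and-persist architecture collapses the problem to a one-dimensional one too early, which is exactly what type-2 counting does not allow.

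For contrast, the paper keeps the problem two-dimensional throughout: it builds a recursive grid where columns and rows are balanced by \emph{weight}, precomputes at each grid corner $p_{ij}$ the capped list of colors in $\Dom(i,j)$ with their exact occurrence counts, and decomposes a query into a precomputed middle part plus one recursive query into a row and one into a column; the three partial counts per color are then \emph{summed}. The $k\log\log n$ term comes from packing each level's color lists into few words (colors live in the rank space of their slab, so they shrink geometrically with depth) and merging them by table lookup, with ``colored shallow cuttings'' used to translate color ranks between a slab and its parent in sub-constant time per color. If you want to salvage your write-up, you would need to replace the single cumulative weight per representative with some mechanism that produces the $x$-restricted count per color --- and at that point you are forced into something like the paper's three-way additive decomposition.
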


Our data structure is based on the recursive grid approach~\cite{AlstrupBR00}. The set of points is recursively sub-divided into vertical slabs (or columns) and horizontal slabs (or rows). 

\subparagraph*{Data Structure.}  Let $\tau=\log^3 n_0$ where $n_0$ is the total weight of all points in the global data set (thus $\tau$ remains unchanged on all recursion levels). We divide the set of points into $\sqrt{n/\tau}$ columns so that either the total weight of all points in a column is bounded by $O(\sqrt{n\tau})$ or a column contains only one point.  This division can be obtained by scanning the set of points in the left-to-right order. We add points to a column $C_i$ for $i=1,2,\ldots$ by repeating the following steps: (1) if the weight of the next point $p$ exceeds $\sqrt{n\tau}$, we increment $i$, (2) we add $p$ to $C_i$, and (3) if the total weight of $C_i$ exceeds $\sqrt{n\tau}$, we increment $i$. 

Thus either the total weight of a column exceeds $\sqrt{n\tau}$ or the next column contains a point of weight at least $\sqrt{n\tau}$. Hence the number of columns is $O(\sqrt{n/\tau})$. 
We also divide the set of points into rows satisfying the same conditions.  Let $p_{ij}=(x_i,y_j)$ denote the point where the upper boundary of the $j$-th row intersects the right boundary of the $i$-th column. Let $\Dom(i,j)=[0,x_i]\times [0,y_j]$, denote  the range dominated by $p_{ij}$. If $\Dom(i,j)$ contains at most $\tau$ distinct colors, we store the list $L_{ij}$ of colors that occur in $\Dom(i,j)$. For every color in $L_{ij}$ we also keep the number of its occurrences in $\Dom(i,j)$. If the range $\Dom(i,j)$ contains more than $\tau$ different colors, we set $L_{ij}=\NULL$.  Thus $L_{ij}$ provides the answer to a capped type-2 counting query on $[0,x_i]\times [0,y_j]$.

Every row/column of weight at least $\tau^2$ that contains more than one point is recursively divided in the same way as explained above. If the total weight of all points is smaller than $\tau^2$, we can answer a type-2 range counting query in $O(k)$ time.
\PAPER{See Appendix~\ref{sec:small}.}

\subparagraph*{Slow Queries.} A query $[0,a]\times [0,b]$ is answered as follows. We identify the column $C_{i+1}$ containing $a$ and the row $R_{j+1}$ containing $b$. The query is then divided into the middle part 
$[0,x_i]\times [0,y_j]$, the upper part $[0,a]\times [y_j,b]$ and the right part $[x_i,a]\times [0,y_j]$. The answer to the middle query is stored in the pre-computed list $L_{ij}$. 
The upper query is contained in the row $R_{j+1}$ and the right query is contained in the column $C_{i+1}$. Hence we can answer the upper and the right query using data structures on $R_{j+1}$ and $C_{i+1}$ respectively. If $L_{ij}=\NULL$, we return $\NULL$ because the number of colors in the query range exceeds $\log^2 n$; if the answer to a query on $C_{i+1}$ or $R_{j+1}$ is $\NULL$, we also return $\NULL$. Otherwise, we merge the answers to the three queries. The resulting list $L$ can contain up to three items of the same color because the same color can occur in the left, right, and middle query. Since the items in $L$ are sorted by color, we can scan $L$ and compute the total number of occurrences for each color in time proportional to the length of~$L$.

The total query time is given by the formula
$Q(n,k)= O(k)+Q(\sqrt{n\tau},k_1) + Q(\sqrt{n\tau},k_2)$ where $k$ is the number of colors in the query range and $n$ is the total weight of all points. We denote by $k_1$ (resp.\ $k_2$) the total number of colors reported by the query on $R_{j+1}$ (resp. $C_{i+1}$). There are at most $2^i$ recursive calls at level $i$ of recursion. The total weight of points at recursion level $i$ is bounded by $n^{1/2^i}\log^{3(1-1/2^i)}n$.  Hence the number of recursion levels is bounded by $\ell=\log \log n - 2\log\log\log n$ and the total query time is $\sum_{i=1}^{\ell} 2^i \cdot k =O(k\cdot (\log n/\log \log n))$.

\subparagraph*{Fast Queries.}
We can significantly speed-up queries using the following approach. We keep colors of all points in a column/row in the rank space. Thus each point column or row on the $l$-th level of recursion contains $O(n^{1/2^l})$ points. Hence for any list $L_{ij}$ on the $l$-th recursion level we  can keep each color and the number of its occurrences in $\Dom(i,j)$ using $O((1/2^l)\log n)$ bits. 

As explained above, the query on recursion level $l$ is answered by merging 
three lists: the list $L_{ij}$ that contains the pre-computed answer to the middle query, the list of colors that  occur in the right query, and the list of colors that occur in the upper query. Every list occupies $O(k/2^l)$ words of $\log n$ bits. Hence we can merge these lists in $O(k/2^l)$ time using table 
look-ups.  Hence the total query time is 
$\sum_{i=1}^{\ell} 2^i \cdot \lceil k/2^i\rceil =O(\log n/\log\log n + k\cdot \log \log n)$.

\subparagraph*{Color Encoding.}
In order to merge lists efficiently we must be able to convert the color encoding for the slab $V^l$  into color encoding for the slab $V^{l-1}$ that contains $V^l$. Moreover the conversion should be performed in $O(k/2^l)$ time, i.e., in sub-constant time per color.  For this purpose we introduce the concept of \emph{colored $t$-shallow cutting} that adapts the concept of shallow cutting to the muti-color scenario. A colored $t$-shallow cutting for a set of points $S$ is the set of $O(|S|/t)$ cells.  Each cell is a rectangle with one corner in the point $(0,0)$.  Each cell contains points of  at most $2t$ different colors. If some point $q$ is not contained in any cell of the $t$-shallow cutting, then $q$ dominates points of at least $t$ different colors.

A colored $t$-shallow cutting can be constructed using the staircase approach, see e.g.,~\cite{VengroffV96}. We start in the point $(0,x_{\max}+1)$ where $x_{\max}$ is the largest $x$-coordinate of a point in $S$. We move $p$ in the $+y$ direction until $p$ dominates $2t$ different colors. Then we move $p$ in the $-x$ direction until $p$ dominates $t$ different colors. We alternatingly move $p$ in $+y$ and $-x$ directions until the $x$-coordinate of $p$ is $0$ or the $y$-coordinate of $p$ is $y_{\max}+1$ where $y_{\max}$ is the largest $y$-coordinate of any point in $S$. Each point where we stopped moving $p$ in $+y$ direction and started moving $p$ in the $-x$ direction is the upper right corner of some cell. We can show that the number of cell does not exceed $O(|S|/t)$:
Let $c_i=(x_i,y_i)$ and $c_{i+1}=(x_{i+1},y_{i+1})$ denote two consecutive corners (in the left-to-right order) of a $t$-shallow cutting. Consider all points $p=(x_p,y_p)$ such that
$x_i\le x_p\le x_{i+1}$ and $y_p\le y_{i+1}$. By our construction, points $p$ that satisfy these conditions have $t$ different colors. Hence there are at least $t$ such points $p$ and we can assign $t$ unique points to every corner of a colored $t$-shallow cutting. Hence the number of corners is $O(n/t)$. 
\begin{figure}
  \centering
  \includegraphics[width=.6\textwidth]{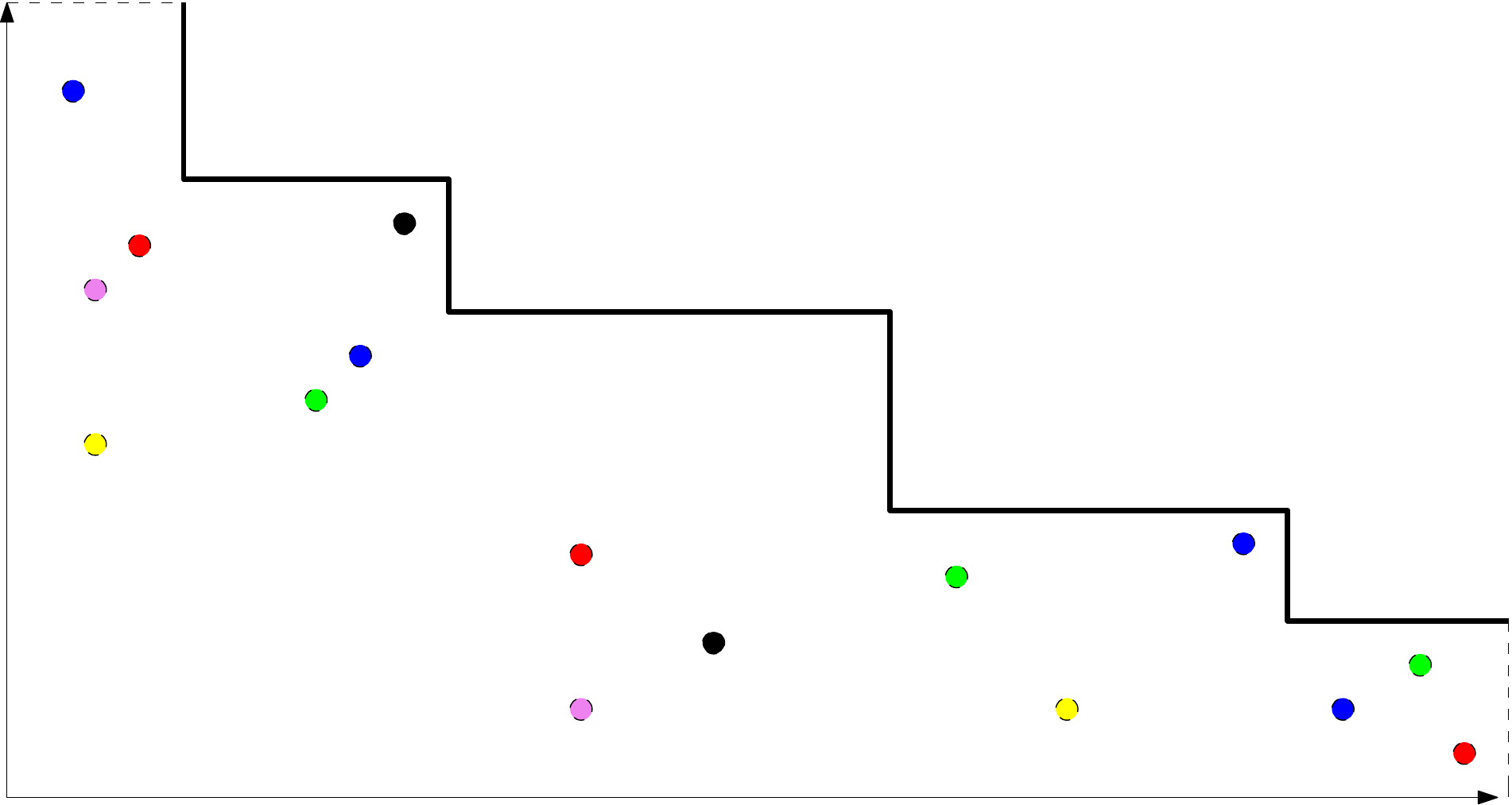}
  \caption{Example of colored $t$-shallow cutting for $t=3$.}
  \label{fig:colored-shcutting}
\end{figure}

For each slab  $V^l$ on any recursion level $l$, we construct colored $t$-shallow cuttings for $t=2$, $4$, $\ldots$, $\tau$. For every cell $c_j$ of each shallow cutting we create the list $\clist(c_j)$ of colors that occur in $c_j$. Colors in $\clist(c_j)$ are stored in increasing order. For each color we store its rank in $V^l$  and  its rank in the slab $V^{l-1}$ that contains $V^l$. Consider a $2$-sided query to a slab $V^l$ on recursion level $l$. The answer to this query is a sorted list $LIST(q)$ of $t$ colors in the rank space of $V^l$.  If $t< \log^2 n$, then the 2-sided query range is contained
in some  cell $c_j$ of  the colored $2^{\lceil\log t\rceil}$ shallow cutting. Using
$\clist(c_j)$ we can convert colors in $LIST(q)$ into the rank space of $V^{l-1}$ where $V^{l-1}$ is the slab that contains $V^l$. The conversion is based on a universal look-up table and takes $O(t/2^l)$ time.

\SOCG{
This result can be extended to $3$-sided and $4$-sided capped queries using divide-and-conquer on range trees and the recursive grid approach. The space usage  of the data structure for capped $4$-sided queries is $O(m\log^{\eps} n)$. Finally the range tree on colors enables us to get rid of the restriction on the number of colors, but the space usage of the data structure is increased by $O(\log n)$ factor. The complete description can be found in
the full paper.
}


\PAPER{



\subsection{Capped $3$-Sided Queries}
\label{sec:3sided}
$2$-sided queries can be extended to the case of $3$-sided queries using standard divide-and-conquer on a range tree. This technique was previously used in the context of color reporting in e.g.,  \cite{AGM02,Nekrich14}.
\begin{lemma}
  \label{lemma:2dcapped-3sid1}
  Let $S$ be the set of $m$ points with total weight $n\ge m$. There exists a data structure that uses $O(m\log m (\log \log n)^2)$ words of space and supports $3$-sided capped type-2 counting queries in $O(\log n/\log \log n + k\log\log n)$ time. 
\end{lemma}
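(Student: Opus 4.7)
The plan is to reduce every $3$-sided capped query to exactly two $2$-sided dominance capped queries from Lemma~\ref{lemma:2dcapped-domin}, by means of a balanced range tree on the $x$-coordinate.

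For the data structure, I would build a balanced binary tree $T$ whose leaves are the $m$ input points sorted by $x$; each internal node $v$ carries a split value $s_v$ and is associated with the set $S_v$ of points in its subtree. At every node $v$ I would install two weighted copies of the Lemma~\ref{lemma:2dcapped-domin} structure on $S_v$: one oriented to answer ``SW'' dominance queries $(-\infty,x]\times(-\infty,y]$ and one (obtained by reflecting the $x$-axis) to answer ``SE'' dominance queries $[x,\infty)\times(-\infty,y]$. Since the total weight summed over a single level of $T$ is at most $n$ and $T$ has depth $O(\log m)$, the $O(|S_v|(\log\log n)^2)$ space bound of Lemma~\ref{lemma:2dcapped-domin} telescopes to $O(m\log m(\log\log n)^2)$ overall.

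To answer a query $Q=[a,b]\times(-\infty,c]$, I would first locate the unique lowest common ancestor $v^*$ in $T$ whose split value satisfies $a\le s_{v^*}\le b$, via an $O(\log\log n)$-time predecessor lookup in the sorted list of split values. The query then decomposes cleanly as
\[
Q\cap S_{v^*} \;=\; \bigl([a,\infty)\times(-\infty,c]\cap S_{v^*_L}\bigr)\;\cup\;\bigl((-\infty,b]\times(-\infty,c]\cap S_{v^*_R}\bigr),
\]
since every point of $S_{v^*_L}$ already satisfies $x\le s_{v^*}\le b$ and every point of $S_{v^*_R}$ already satisfies $x\ge s_{v^*}\ge a$. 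I would answer the left piece using the SE structure at $v^*_L$ and the right piece using the SW structure at $v^*_R$; by Lemma~\ref{lemma:2dcapped-domin} each subquery returns (possibly $\NULL$) in $O(\log n/\log\log n+k_i\log\log n)$ time with a sorted (color, count) output list of size $k_i$. Finally I would merge the two lists, summing counts of shared colors in $O(k_L+k_R)$ time using the bit-packed representation and universal look-up tables already built into the Lemma~\ref{lemma:2dcapped-domin} structures. If either subquery returns $\NULL$, or the merged list has more than $\tau=\log^3 n$ entries, I output $\NULL$.

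The correctness of the cap hinges on the observation that every color returned by either subquery actually lies inside $Q$ (since the $x$-range of $S_{v^*_L}$ is bounded above by $s_{v^*}\le b$ and the $x$-range of $S_{v^*_R}$ is bounded below by $s_{v^*}\ge a$); thus $k_L,k_R\le k$ and $k_L+k_R\le 2k$, yielding a total query cost of $O(\log n/\log\log n+k\log\log n)$. A $\NULL$ from either child implies $k_i>\tau$ and hence $k>\tau$, so the composite $\NULL$ is sound. The main subtlety I anticipate is the rank-space interface: the two instances encode colors locally within $S_{v^*_L}$ and $S_{v^*_R}$, so before the final merge we must translate both output lists into a common encoding at $v^*$. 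Fortunately, the colored shallow cuttings introduced in the proof of Lemma~\ref{lemma:2dcapped-domin} provide exactly this parent-rank conversion in $O((k_L+k_R)/\log n)$ word operations, which is absorbed into the stated query time.
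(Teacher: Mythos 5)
Your proposal is correct and follows essentially the same route as the paper: a range tree on the $x$-coordinates with the $2$-sided structure of Lemma~\ref{lemma:2dcapped-domin} at each node, decomposition of $[a,b]\times(-\infty,c]$ at the lowest common ancestor into one SE-dominance query on the left child and one SW-dominance query on the right child, followed by an $O(k)$-time merge of the two sorted color lists. Your additional remarks on the soundness of the $\NULL$ propagation and on translating the two children's color encodings into a common rank space are details the paper leaves implicit, and they are handled consistently with its machinery.
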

We create the range tree on $x$-coordinates of points. All points in a node $u$ are stored in the data structure of Lemma~\ref{lemma:2dcapped-domin} that supports $2$-sided queries. Given  a query range $[a,b]\times [0,c]$, we find the lowest common ancestor $u$ of the leaves holding $a$ and $b$. Let $u_l$ and $u_r$ denote the left and the right children of $u$. We answer  type-2 range counting queries $[0,b]\times [0,c]$ on $S(u_r)$ and $[a,+\infty)\times [0,c]$ on $S(u_l)$. The answer to each query is a sorted list of colors occurring in $[0,b]\times [0,c]\cap S(u_r)$ and $[a,+\infty)\times [0,c]\cap S(u_r)$ respectively.
Since these lists contain colors in sorted order, they can be merged in $O(k)$ time.

\subparagraph*{Space-Efficient Data Structure.}
The space usage can be reduced to $O(m\log^{\eps}n)$ using the recursive \emph{lopsided} grid approach~\cite{ChanLP11}. 

Our data structure can be viewed as a tree with node degree $O(A)$, for a parameter $A=2^{\log^{1-\eps}n}$, on $x$-coordinates of points. The tree is similar to a range tree, but we take into consideration that points are weighted.  All points are stored in the root node.
If the total weight of the set $S(u)$ stored in a node $u$ exceeds $A$ and $S(u)$ contains more than one point, we distribute the points from $S(u)$ among the $O(A)$ children of $u$.
We guarantee that either the total weight of all points in a child node does not exceed $O(n_u/A)$, where $n_u$ is the total weight of points in a node $u$, 
or the child node contains only one point. Points from $S(u)$ can be distributed among the child nodes using the method described in Section~\ref{sec:dominance}. 

Each set $S(u)$ stored in an internal  node $u$ is divided into columns and rows. Columns correspond to children of $u$: a point is stored in the $i$-th column $C_i$ if it is stored in the $i$-th child of $u$.  $S(u)$ is divided into $O(n_u/(A\cdot \tau\cdot\log m))$ rows $R_i$, so that either the total weight of all points in a row is $O(\tau\cdot \log m\cdot A)$ or the row consists of only one point. Again we can use the method from Section~\ref{sec:dominance} to divide $S(u)$ into rows.

We keep the following additional data structures in tree nodes.  For every grid cell $G_{ij}=C_i\cap R_j$ we identify up to $\tau$  distinct colors that occur in $G_{ij}$. For every such color we keep a single point $p(G_{ij},\alpha)$ in $D_t$. The weight of $p(G_{ij},\alpha)$ is the total weight of all points with color $\alpha$ in $G_{ij}$. These points are stored in a data structure $D_t$ implemented as in Lemma~\ref{lemma:2dcapped-3sid1}.  If a grid cell contains more than $\tau$ colors, we say that this cell is \emph{marked}.  For every column we store a data structure supporting $2$-sided queries. We keep a recursively defined data structure for every row that contains more than  $2^{2\log^{\eps/2}n}$ points.  
We also keep a recursively defined data structure for each set $S(u)$, such that $u$ is a leaf node and  $S(u)$ contains more than $2^{2\log^{\eps/2}n}$ points. If the data structure contains at most $2^{2\log^{\eps/2}n}$ points, then we implement it as described in Lemma~\ref{lemma:2dcapped-3sid1}.

A query $Q=[a,b]\times [0,c]$ is answered as follows. We identify the lowest common ancestor $u$ of leaves that hold $a$ and $b$. Since $[a,b]\times [0,c]\cap S \subseteq S(u)$, it suffices to answer the query on $S(u)$. Let $C_l$ and $C_r$ denote columns that contain $a$ and $b$. Let $R_j$ denote the row that contains $c$. The query is divided into four parts.  We answer $2$-sided queries $Q\cap C_l$ and $Q\cap C_r$ on columns that contain $a$ and $b$. We answer a $3$-sided query $Q\cap R_j$ using the recursive data structure in a row $R_j$.
Let the remaining part of the query $Q'= Q\cap (S(u)\setminus (C_l\cup C_r\cup R_j))$ be called the middle query.  Sides of the middle query correspond to column and row boundaries. If the middle query contains a  marked cell, we report \NULL. Otherwise we answer the middle query using the data structure $D_t$.

Let $S(n,m)$ denote the space usage of the data structure in bits and let $R(n)=S(n,m)/m$. 
Then  $R(n)= \log^{1+\eps}n + (1+\log^{\eps}n) R(2^{1-\log^{\eps}n})$. The recursion depth is $O(1)$ and the space usage in the base case is $R(n)=O(\log^{2\eps}n(\log\log n)^2)$.   Hence $R(n)=O(\log^{1+2\eps} n(\log\log n)^2)$. By adjusting the constant $\eps$, the  total space usage of our data structure in words is $O(m\log^{\eps}n)$.

\begin{lemma}
  \label{lemma:2dcapped-3sid2}
  Let $S$ be the set of $m$ points in $\R^2$ with total weight $n\ge m$. There exists a data structure that uses $O(m\log^{\eps} n)$ words of space and supports $3$-sided capped type-2 counting queries in $O(\log n/\log \log n + k\log\log n)$ time. 
\end{lemma}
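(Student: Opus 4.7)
The plan is to bootstrap Lemma~\ref{lemma:2dcapped-3sid1} through a recursive lopsided grid, using an $A$-ary weight-balanced tree on $x$-coordinates with branching factor $A=2^{\log^{1-\eps}n}$. At each internal node $u$ holding a set $S(u)$ of total weight $n_u$, I would partition $S(u)$ into $O(A)$ weight-balanced columns (one per child) via the left-to-right sweep of Section~\ref{sec:dominance}, and separately into $O(n_u/(A\tau\log m))$ weight-balanced rows, where $\tau=\log^3 n_0$. For every grid cell $G_{ij}$ I would identify up to $\tau$ distinct colors occurring in it and, for each such color $\alpha$, create a single ``representative'' point $p(G_{ij},\alpha)$ whose weight equals the total weight of color-$\alpha$ points in $G_{ij}$; cells with more than $\tau$ colors are marked. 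The representatives are stored in a 2-sided capped type-2 structure $D_t$ from Lemma~\ref{lemma:2dcapped-domin}; the per-column data is also kept in such a 2-sided structure; and each row of weight exceeding $2^{2\log^{\eps/2}n}$ carries a recursively-built 3-sided structure, while smaller nodes switch to Lemma~\ref{lemma:2dcapped-3sid1} directly as the base case.

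To answer a query $Q=[a,b]\times[0,c]$ I would descend to the lowest common ancestor $u$ of the leaves holding $a$ and $b$, identify the columns $C_\ell,C_r$ through $a,b$ and the row $R_j$ through $c$, and split $Q$ into four parts: the two 2-sided capped subqueries $Q\cap C_\ell$ and $Q\cap C_r$, the recursive 3-sided subquery $Q\cap R_j$, and a ``middle'' subquery whose sides coincide with column and row boundaries. Because the middle subquery reduces to a dominance query on the corner staircase, it is answered by a single 2-sided capped query on $D_t$ (or returns $\NULL$ if any enclosed cell is marked). The four sorted color lists are merged in rank space via the table-lookup machinery and colored shallow cuttings already developed for Lemma~\ref{lemma:2dcapped-domin}, giving total query time $O(\log n/\log\log n + k\log\log n)$, since the recursion depth is $O(1)$ and each level contributes only constant-factor overhead.

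For space, let $R(n)$ denote the per-point cost in bits at a subtree of total weight $n$. Each node contributes $O(\log^{1+\eps} n)$ bits per point through the $D_t$ and per-column 2-sided structures of Lemma~\ref{lemma:2dcapped-domin}, giving the recurrence
\[ R(n) \:\le\: \log^{1+\eps} n \:+\: (1+\log^{\eps} n)\,R\!\left(2^{\log^{1-\eps}n}\right). \]
Because iterating $\log^{1-\eps}$ drives the argument below any $\log^{\eps/2} n$ threshold after $O(1)$ steps, the recursion bottoms out in constant depth at a base case of weight at most $2^{2\log^{\eps/2}n}$, where Lemma~\ref{lemma:2dcapped-3sid1} contributes $R(n)=O(\log^{2\eps} n (\log\log n)^2)$ bits per point. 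Unrolling yields $R(n)=O(\log^{1+2\eps} n (\log\log n)^2)$ bits in total, which after rescaling $\eps$ becomes $O(\log^{\eps} n)$ words per point, i.e.\ $O(m\log^{\eps} n)$ words globally.

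The main obstacle I expect is keeping the middle subquery genuinely 2-sided without double-counting: its four sides lie on grid lines so it reduces to a dominance query on the representative staircase, but one must verify that the $C_\ell,C_r,R_j$ subqueries (which work on original points) and the $D_t$ subquery (which works on representatives) partition the contribution of every color in $Q$ exactly once. A secondary subtlety is detecting a marked cell along the touched boundary in $O(1)$ time so as to return $\NULL$ promptly, handled by a universal lookup on the constant-length boundary descriptor; and, as in Lemma~\ref{lemma:2dcapped-domin}, one must attach to every colored shallow-cutting cell a precomputed translation table between the ranks of colors in a child slab and those in its parent, so that rank-space conversion during the final merge still costs only $o(1)$ per reported color.
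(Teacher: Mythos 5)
There is a genuine gap, and it is exactly at the point you flagged as your ``main obstacle.'' At the lowest common ancestor $u$, the $x$-range $[a,b]$ occupies a general interval of columns $C_\ell,\dots,C_r$ of $u$'s grid, not a prefix; hence the middle subquery (columns strictly between $C_\ell$ and $C_r$, rows below $R_j$) is bounded on the \emph{left}, the \emph{right}, and the top by grid lines. It is a genuinely $3$-sided query on the representative points and does not ``reduce to a dominance query on the corner staircase'' --- there is nothing anchoring it at $x=0$. Consequently a $2$-sided structure from Lemma~\ref{lemma:2dcapped-domin} for $D_t$ cannot answer it. The obvious repair, taking the difference of the two dominance queries $[0,x_r]\times[0,y_j]$ and $[0,x_\ell]\times[0,y_j]$, also fails for capped type-2 counting: a prefix range may contain more than $\log^3 n$ colors (forcing a spurious $\NULL$) even when the query range itself contains few, and it may contain many colors that do not occur in $Q$ at all, destroying the $O(k\log\log n)$ output-sensitive term.

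The paper's proof differs from yours in precisely this one component: $D_t$ is implemented with the $3$-sided structure of Lemma~\ref{lemma:2dcapped-3sid1}, so the middle query is answered directly as a $3$-sided query on the representatives. The extra $\log m$ factor in that structure's space is affordable because the rows are deliberately chosen of weight $\Theta(\tau\cdot A\cdot\log m)$, which bounds the number of grid cells by $O(n_u/(\tau\log m))$ and hence the number of representative points by $O(n_u/\log m)$ --- this is exactly the $\log m$ that appears in your own row count $O(n_u/(A\tau\log m))$ but is never exploited in your space accounting. With $D_t$ switched to the $3$-sided structure, the rest of your construction (columns with $2$-sided structures, recursive row structures, base case via Lemma~\ref{lemma:2dcapped-3sid1} below weight $2^{2\log^{\eps/2}n}$, the recurrence $R(n)\le\log^{1+\eps}n+(1+\log^{\eps}n)R(A)$ with $O(1)$ depth, and the rank-space merging) matches the paper and goes through.
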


\subsection{Capped $4$-Sided Queries}
\label{sec:4sided}
The result of Lemma~\ref{lemma:2dcapped-3sid2} can be extended to $4$-sided queries using the same technique as in Lemma~\ref{lemma:2dcapped-3sid1}.
\begin{lemma}
  \label{lemma:2dcapped-4sid1}
  Let $S$ be the set of $m$ points in $\R^2$ with total weight $n\ge m$. There exists a data structure that uses $O(m \log m \log^{\eps} n)$ words of space and supports $4$-sided capped type-2 counting queries in $O(\log n/\log \log n + k\log\log n)$ time. 
\end{lemma}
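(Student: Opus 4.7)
The plan is to mirror the range-tree construction used in the proof of Lemma~\ref{lemma:2dcapped-3sid1}, but to perform it now on the $y$-coordinates instead of the $x$-coordinates, and to use the $3$-sided structure of Lemma~\ref{lemma:2dcapped-3sid2} as the base building block. The effect is to add the missing lower-$y$ side $y\ge c$ of a general $4$-sided query $[a,b]\times[c,d]$ on top of a $3$-sided structure that already handles the other three sides.

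Concretely, I would build a balanced binary range tree $T$ on the $y$-coordinates of the $m$ points of $S$. For each internal node $u\in T$, let $S(u)$ be the subset of $S$ whose $y$-coordinate lies in $u$'s $y$-interval, and store on $S(u)$ two instances of the $3$-sided capped type-2 counting data structure of Lemma~\ref{lemma:2dcapped-3sid2}: one oriented to answer queries of the form $[a,b]\times[c,+\infty)$, the mirror one to answer queries of the form $[a,b]\times(-\infty,d]$. Each point of $S$ belongs to $S(u)$ for $O(\log m)$ nodes $u$ on its root-to-leaf path, and at each such node Lemma~\ref{lemma:2dcapped-3sid2} contributes $O(\log^\eps n)$ words per point, so the total space is $O(m\log m\log^\eps n)$ words.

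To answer a $4$-sided capped query $Q=[a,b]\times[c,d]$, I would locate the lowest common ancestor $u$ of the leaves of $T$ holding $c$ and $d$, with children $u_l$ and $u_r$. Every point of $S$ with $y\in[c,d]$ lies in $S(u_l)\cup S(u_r)$, and on each child the $y$-range degenerates to a one-sided interval: on $S(u_l)$ I issue the $3$-sided query $[a,b]\times[c,+\infty)$, and on $S(u_r)$ the $3$-sided query $[a,b]\times(-\infty,d]$. Each subquery either returns $\NULL$, in which case the algorithm immediately returns $\NULL$, or a list of (color, weight) pairs sorted by color. A linear scan merges the two sorted lists, summing weights whenever a color appears in both; if the merged list contains more than $\log^3 n$ colors it is discarded in favor of $\NULL$. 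The two subqueries together cost $O(\log n/\log\log n+k\log\log n)$ time by Lemma~\ref{lemma:2dcapped-3sid2}, and the linear merge adds $O(k)$, which stays within the claimed bound.

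The one subtlety, not really an obstacle since the analogous step appears already in Lemma~\ref{lemma:2dcapped-3sid1}, is that the two output lists must be sorted in a common color order for the merge to take $O(k)$ time. This is handled exactly as in Lemma~\ref{lemma:2dcapped-domin}: colors are reported in the rank space of the subset $S(u)$ they come from, and the colored shallow-cutting conversion tables transport each local rank-space encoding into the shared (global or parent-level) rank space in sub-constant time per reported color. Plugging this translation layer between the two subquery outputs and then merging completes the $3$-sided-to-$4$-sided extension and yields the bounds claimed in Lemma~\ref{lemma:2dcapped-4sid1}.
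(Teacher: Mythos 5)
Your proposal is correct and is essentially the paper's own argument: the paper proves Lemma~\ref{lemma:2dcapped-4sid1} by the one-line observation that the range-tree technique of Lemma~\ref{lemma:2dcapped-3sid1} (LCA split into two one-sided subqueries, now over the remaining coordinate, with the structure of Lemma~\ref{lemma:2dcapped-3sid2} stored at each node) extends $3$-sided to $4$-sided queries, which is exactly what you do, including the $O(m\log m\log^{\eps}n)$ space accounting and the $O(k)$ merge of sorted color lists.
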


\begin{lemma}
  \label{lemma:2dcapped-4sid2}
  Let $S$ be the set of $m$ points in $\R^2$ with total weight $n\ge m$. There exists a data structure that uses $O(m \log^{\eps} n)$ words of space and supports $4$-sided capped type-2 range counting queries in $O(\log n/\log \log n + k\log\log n)$ time. 
\end{lemma}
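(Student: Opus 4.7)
The plan is to save a logarithmic factor over Lemma~\ref{lemma:2dcapped-4sid1} by combining the range-tree trick used in the proof of Lemma~\ref{lemma:2dcapped-4sid1} with the recursive lopsided-grid scheme used to obtain Lemma~\ref{lemma:2dcapped-3sid2} from Lemma~\ref{lemma:2dcapped-3sid1}. Concretely, I would build a tree of degree $O(A)$ with $A=2^{\log^{1-\eps}n}$ on the $x$-coordinates of $S$, distributing points among children of a node $u$ by the weighted partition of Section~\ref{sec:dominance} so that each child receives total weight $O(n_u/A)$ (or is a single heavy point). At each internal node $u$, $S(u)$ is simultaneously partitioned into the $A$ columns $C_i$ (matching children) and into $O(n_u/(A\tau\log m))$ rows $R_j$. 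For each column $C_i$ I store a $3$-sided structure from Lemma~\ref{lemma:2dcapped-3sid2} on $S(C_i)$; for each row $R_j$ whose size exceeds $2^{2\log^{\eps/2}n}$ I recursively build the present $4$-sided structure, while smaller rows fall back on Lemma~\ref{lemma:2dcapped-4sid1}. For each grid cell $G_{ij}=C_i\cap R_j$ I retain up to $\tau$ weighted representatives (one per distinct color, weighted by that color's total weight in $G_{ij}$), mark cells whose color count exceeds $\tau$, and store the representatives in a structure $D_t$ built via Lemma~\ref{lemma:2dcapped-4sid1}.

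To answer $Q=[a,b]\times[c,d]$, I would locate the lowest common ancestor $u$ of the leaves holding $a$ and $b$. If $[a,b]$ is contained in a single child's column, I descend into that child's recursive structure; if $[c,d]$ lies inside a single row of $u$, I descend into that row's recursive structure. Otherwise, letting $C_l,C_r$ be the columns of $u$ containing $a,b$ and $R_i,R_j$ the rows containing $c,d$, I decompose $Q$ into four boundary pieces --- $3$-sided queries on $C_l$ and $C_r$ handled by their Lemma~\ref{lemma:2dcapped-3sid2} structures, and $3$-sided queries on $R_i$ and $R_j$ posed as $4$-sided queries on their recursive structures with one side unrestricted --- plus a grid-aligned middle piece answered by $D_t$ via Lemma~\ref{lemma:2dcapped-4sid1}. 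I merge the five resulting sorted color lists by packed table look-ups as in Lemma~\ref{lemma:2dcapped-domin}, propagating \NULL{} whenever any piece reports \NULL{} or the middle piece encounters a marked cell. Since the recursion depth is $O(1)$ and each level performs the decomposition in the claimed time, the total query cost is $O(\log n/\log\log n + k\log\log n)$.

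The main obstacle is the space accounting. Each recursive row and the per-node $D_t$ pay the $\log m$ overhead inherited from Lemma~\ref{lemma:2dcapped-4sid1}, while the degree-$A$ grid introduces a $(1+\log^\eps n)$ blow-up at every level. Measuring the per-point space in bits as $R(n)$, I would establish the recurrence $R(n)\le O(\log^{1+\eps}n) + (1+\log^\eps n)\,R(2^{\log^{1-\eps}n})$ already used in Lemma~\ref{lemma:2dcapped-3sid2}: the additive term absorbs the bits-per-word cost of the $D_t$ structures (whose size scales like $n/(A\log m)$) together with the per-column $3$-sided structures, and the branching factor $(1+\log^\eps n)$ accounts for the recursion on the $O(n/(A\tau\log m))$ rows together with the $\log^\eps n$ blow-up from Lemma~\ref{lemma:2dcapped-4sid1}. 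Because $\log^{1-\eps}n$ shrinks $\log n$ geometrically, the recursion bottoms out in $O(1/\eps)=O(1)$ levels with $R(n)=O(\log^{1+2\eps}n(\log\log n)^2)$ bits per point, i.e., $O(\log^{2\eps}n(\log\log n)^2)$ words, which after readjusting $\eps$ yields the claimed $O(m\log^\eps n)$ total word complexity.
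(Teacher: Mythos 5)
Your construction is correct in outline, but it takes a genuinely different route from the paper's proof. The paper obtains Lemma~\ref{lemma:2dcapped-4sid2} by reusing the \emph{balanced} recursive grid of Section~\ref{sec:dominance} (about $\sqrt{n/\tau}$ rows and columns, $O(\log\log n)$ recursion levels): every row and column carries the $3$-sided structure of Lemma~\ref{lemma:2dcapped-3sid2}, the per-cell weighted color representatives form a structure $D_t$ implemented via Lemma~\ref{lemma:2dcapped-4sid1}, a query spanning several slabs is split into at most four $3$-sided slab queries plus one grid-aligned middle query on $D_t$ with \emph{no} further recursion, recursion into a slab occurs only when the query nests inside it, and the space is summed over the $O(\log\log n)$ grid levels (the extra $\log\log n$ factor being absorbed into $\log^{\eps}n$). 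You instead lift the lopsided degree-$A$ construction ($A=2^{\log^{1-\eps}n}$) used to prove Lemma~\ref{lemma:2dcapped-3sid2} up one level: columns are the children of the $A$-ary tree and carry Lemma~\ref{lemma:2dcapped-3sid2} structures, the $\log m$-coarsened rows carry self-recursive $4$-sided structures, $D_t$ is again Lemma~\ref{lemma:2dcapped-4sid1}, the query is decomposed at the LCA of $a$ and $b$, and the recursion depth is $O(1)$. Both routes rest on the same two ingredients---boundary pieces are $3$-sided and handled by Lemma~\ref{lemma:2dcapped-3sid2}, and the grid-aligned middle is handled by weighted per-cell representatives whose point count is enough smaller than their weight that the $\log m$ overhead of Lemma~\ref{lemma:2dcapped-4sid1} is absorbed. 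What each buys: the paper's version is shorter since the balanced grid and its analysis are already in place; your version keeps constant recursion depth and makes the $D_t$ accounting explicit (rows of weight $O(A\tau\log m)$ force only $O(n_u/\log m)$ representative points per node), at the price of redoing the lopsided space recurrence, whose exponents you state slightly loosely (the additive term is $O(\log^{1+2\eps}n)$ bits per point once the $\log^{\eps}n$ tree depth is counted), though this washes out when $\eps$ is readjusted. Two small points to tidy: the column pieces of a $4$-sided query are $3$-sided with both $y$-sides bounded, so the per-column structures must be transposed copies of Lemma~\ref{lemma:2dcapped-3sid2}; and the packed-word merging of Lemma~\ref{lemma:2dcapped-domin} is unnecessary here---with only $O(1)$ sorted lists per level and $O(1)$ levels, plain $O(k)$-time merging as in Lemma~\ref{lemma:2dcapped-3sid1} suffices and avoids setting up rank-space conversions across pieces (you should also make the five pieces disjoint so that occurrence counts are not double-counted, as in the decomposition of Lemma~\ref{lemma:2dcapped-domin}).
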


We use the same recursive grid approach as in Section~\ref{sec:dominance} and keep the data structure of Lemma~\ref{sec:3sided} for every row and every column. We also keep the data structure $D_t$ constructed as follows.  For every grid cell $G_{ij}=C_i\cap R_j$ we identify $\tau$  colors that occur in $G_{ij}$. For every such color we keep a single point $p(G_{ij},\alpha)$ in $D_t$. The weight of $p(G_{ij},\alpha)$ is the total weight of all points with color $\alpha$ in $G_{ij}$. If $G_{ij}$ contains more than $\tau$ colors, we say that this cell is marked.

The total weight of all points in $D_t$ is bounded by $n$ and the number of points is bounded by $m=n/\tau$. Hence, by Lemma~\ref{lemma:2dcapped-4sid1} the space used by $D_t$ is bounded by $O(m\log m \log^{\eps} n)=O(n\log^{\eps}n)$ words. Data structures for three-sided queries also use $O(n \log^{\eps} n)$ words. Summing over all recursion levels, 
the total space usage of all data structures is $\sum_{i=1}^{\ell} O(n\log^{1+\eps} n)=O(n\log^{1+\eps}n\log\log n)$ bits. 

If a query is entirely contained in one slab, we recursively answer the query using the data structure for that slab. If a query intersects more than one column or more than one row, we can represent the query as a union of at most four three-sided queries on slabs and at most one $4$-sided  query. This decomposition of a query into four parts is almost the same as the decomposition used in the proof of Lemma~\ref{lemma:2dcapped-domin}.  If the middle  query contains at least one marked cell, we  return $\NULL$.
Otherwise we can answer the middle query using data structure $D_t$. We can answer three-sided queries using the data structure of Lemma~\ref{lemma:2dcapped-3sid2}. The list of all colors in a $4$-sided range and their occurrences is computed by merging the answers to 3-sided queries and the query on $D_t$.

\subsection{General Case}
\label{sec:general}
Capped type-2 counting queries can be extended to the general case by constructing the range tree on colors. This technique was recently applied to color reporting queries~\cite{ChanN20} and was used earlier to answer  circular range reporting queries~\cite{ChazelleCPY86}.  We associate a set of colors with every node of the range tree. The set of all colors is associated to the root node; the set of colors in each node $\nu$ is divided into two equal parts that are associated to the left and the right children of $\nu$. Let $S(\nu)$ denote the set of all points $p$, such that the color of $p$ is associated to the node $\nu$.  

We keep the data structure of Lemma~\ref{lemma:2dcapped-4sid2} for each $S(\nu)$. Additionally we store a data structure $\cE$ that enables us to estimate the number of colors in $S(\nu)\cap Q$ for any $4$-sided range $Q$: if $Q\cap S(\nu)$ contains at most $\log^2 n$ colors, the data structure returns {\tt yes}. If $Q\cap S(\nu)$ contains more than $\log^3 n$ colors, the data structure returns {\tt no}. If the number of colors is between $\log^2 n/2$ and $\log^3 n$ the answer can be either {\tt yes} or {\tt no}. 

\begin{lemma}
  \label{lemma:struct-e}
Data structure $\cE$ can be implemented in $O(n\log^{\eps} n)$ space so that queries are supported in $O(\log n/\log\log n)$ time.
\end{lemma}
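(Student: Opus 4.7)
The plan is to implement $\cE$ as a capped type-$2$ structure (Lemma~\ref{lemma:2dcapped-4sid2}) built on a random color sample and queried in a Boolean ``\textit{NULL}-check'' mode, which bypasses the $k\log\log n$ enumeration cost.

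Sample each color of $S(\nu)$ independently with probability $p=c/\log^{2}n$ for a sufficiently large constant $c$; let $S'$ be the surviving points. Build the $4$-sided capped type-$2$ structure of Lemma~\ref{lemma:2dcapped-4sid2} on $S'$ with cap $\tau'=C\log n/\log\log n$ for a sufficiently large constant $C$. Because $\mathbb{E}[|S'|]\le|S(\nu)|/\log^{2}n$, and lowering the cap does not harm the $O(m\log^{\eps}n)$ space bound, summing over the outer range tree on colors keeps the total space within the $O(n\log^{\eps}n)$ budget.

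On a query $Q$, run the capped query on $(S',Q)$ but suppress the merge step: just descend the recursive grid and propagate \textit{NULL} flags upward from any pre-computed $L_{ij}$ that is marked \textit{NULL} or any sub-answer that is \textit{NULL}. Skipping the merge eliminates the $k\log\log n$ work, leaving only the $O(\log n/\log\log n)$ traversal overhead (the $2^\ell$ leaf calls in the recursion tree of Lemma~\ref{lemma:2dcapped-domin}). Return \textbf{no} if the top-level call yields \textit{NULL}, and \textbf{yes} otherwise.

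Correctness follows from Chernoff applied to $X:=|\{\text{colors of }S'\cap Q\}|$. If $k\le\log^{2}n$, then $\mathbb{E}[X]\le c$; the multiplicative upper tail gives $\Pr[X>\tau']\le(ec/\tau')^{\tau'}\le n^{-\Omega(C)}$, so with high probability $X\le\tau'$, the structure is non-\textit{NULL}, and $\cE$ answers \textbf{yes}. If $k>\log^{3}n$, then $\mathbb{E}[X]\ge c\log n\gg\tau'$; the Chernoff lower tail gives $\Pr[X<\tau']\le\exp(-\Omega(c\log n))=n^{-\Omega(c)}$, so the structure returns \textit{NULL} and $\cE$ answers \textbf{no}. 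Union-bounding over the $O(n^{4})$ possible rank-space $4$-sided query ranges (legitimate under the oblivious-adversary assumption used throughout the paper) promotes both guarantees to hold simultaneously with high probability. The main obstacle is the joint calibration of $p$ and $\tau'$ so that both Chernoff tails yield $n^{-\Omega(1)}$ failure while the traversal stays within $O(\log n/\log\log n)$; fortunately, the factor-$\log n$ gap between $\log^{2}n$ and $\log^{3}n$ is exactly the slack that makes such calibration possible.
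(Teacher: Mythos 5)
Your reduction has a genuine gap in the ``no'' direction. You argue: if $k>\log^3 n$ then the sampled color count $X$ in $Q$ exceeds the cap $\tau'$ w.h.p., ``so the structure returns \textit{NULL}.'' But the capped structures of Lemmas~\ref{lemma:2dcapped-domin}--\ref{lemma:2dcapped-4sid2} do not promise a \textit{NULL} answer whenever the color count exceeds the cap; their enforceable guarantee is only the converse (a correct, non-\textit{NULL} answer whenever the count is at most the cap). \textit{NULL} is triggered only when a specific precomputed object inside the query region --- a list $L_{ij}$ for a dominance range $\Dom(i,j)$, or a marked grid cell --- exceeds the cap. A query whose many colors are spread across the ``side'' parts of the decomposition (the strips handled recursively in rows/columns, and ultimately the small leaf sets of weight below $\tau^2$, which are answered exactly with no capping) can avoid every such trigger and come back non-\textit{NULL}; in fact a non-\textit{NULL} answer can legitimately contain up to roughly $(\text{cap})\times(\text{number of recursion nodes})$ colors, which dwarfs the $\Theta(\log n)$ sampled colors your Chernoff bound provides. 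So your $\cE$ may answer \textbf{yes} at a node where $S(\nu)\cap Q$ has far more than $\log^3 n$ colors, and the range tree on colors then stops descending at that node and queries the capped structure outside its guarantee --- breaking both correctness and the $O(\log n/\log\log n + k\log\log n)$ time bound. Patching this by counting the size of the returned list does not work either, because then the query cost is proportional to that (possibly huge) list, not $O(\log n/\log\log n)$. You would also need to re-verify the internal parameterization of the capped structure after replacing $\tau=\log^3 n_0$ by $\tau'=\Theta(\log n/\log\log n)$ (the column/row widths, recursion depth, and the bit-packed merging all depend on $\tau$), and note that your construction is only Monte Carlo, whereas the paper's is deterministic.

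For comparison, the paper implements $\cE$ without sampling or the capped machinery: build a range tree of degree $\log^{\eps/2}n$ on $x$-coordinates, decompose $Q=[a,b]\times[c,d]$ into $O(\log n/\log\log n)$ one-dimensional pieces $M(u_t,i_t,j_t)$, and equip each such set with the $O(|M(u,i,j)|)$-bit structure of El-Zein et al.~\cite{El-ZeinMN17} that answers one-dimensional approximate colored counting (within a factor $2$) in $O(1)$ time. Thresholding each piece at $\log^2 n$ and exploiting the factor-$\log n$ gap between $\log^2 n$ and $\log^3 n$ absorbs both the $O(\log n/\log\log n)$ number of pieces and the constant-factor approximation, giving a deterministic two-sided guarantee in $O(\log n/\log\log n)$ time.
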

We construct a standard range tree with node degree $\log^{\eps/2}n$ on $x$-coordinates of points. For every node $u$ of the range tree and any $1 \le j\le \log^{\eps/2}n$  let $M(u,i,j)$ denote the set of points stored in children $u_i,\ldots,u_j$ of $u$. We keep $M(u,i,j)$ in the data structure from \cite{El-ZeinMN17} that uses $O(|M(u,i,j)|)$ bits and answers one-dimensional approximate range counting queries (with respect to $y$-coordinates) in $O(1)$ time. 

A query  range $Q=[a,b]\times [c,d]$ can be represented as a union of $O(\log n/\log\log n)$ one-dimensional queries: For any $[a,b]$ there are $O(\log n/\log\log n)$ sets $M(u_t,i_t,j_t)$ such that a point $p$ is in $[a,b]\times [c,d]$ if and only if  $p.y\in [c,d]$ and $p\in M(u_t,i_t,j_t)$ for some $t$. 
Suppose that the number of distinct colors of all points $p$ satisfying $p.y\in [c,d]$ and $p\in M(u_t,i_t,j_t)$ is between $\alpha_t$ and $2\alpha_t$. 
If $\alpha_t< \log^2 n$ for all $t$, we return {\tt yes}. In this case the total number of colors in $Q$ is smaller than $\log^3 n$. If $\alpha_t> \log^2 n$ for at least one $t$, we return {\tt no}. Obviously in this case there are at least $\log^2 n$ distinct colors in $Q$.

Now we return to the type-2 range counting problem. 
To answer a query $Q$, we start at the root node of the range tree on colors. Let $\nu$ denote the currently visited node.  If the query $Q$ to a data structure $\cE$ in a node $\nu$ returns {\tt yes}, then the number of distinct colors in $S(\nu)\cap Q$ does not exceed $\log^3 n$ and we answer the query using the data structure of Lemma~\ref{lemma:2dcapped-4sid2}. Otherwise we visit both children of $\nu$. The total number of visited nodes does not exceed $O(k/\log n)$. Hence the total time needed to answer a query is $O(\log n/\log\log n + k\log\log n)$. The total space usage of the data structure is increased by a factor $O(\log n)$ in comparison to \ref{lemma:2dcapped-4sid2}. This completes the proof of Theorem~\ref{theor:2dcapped-domin}.


\subsection{Type-2 Counting on a Small Set}
\label{sec:small}
It remains to show how to answer a type-2 range counting query in the case when set  contains a poly-logarithmic number of points of weight at most $\tau^2=\log^6 n$. 

Consider a set $S$ that contains at most $(1/18)\log n/\log \log n$ points. If coordinates and colors  are reduced to rank space, then there are at most $n^{1/2}$ combinatorially different sets of that size.  We can ask a poly-logarithmic number of different queries and the answer to each query has poly-logarithmic size.  Hence answers to all queries on all different sets can be stored in a universal look-up table of size $n^{1/2}\mathop{\rm polylog}n$.  

If $S$ contains $\log^{O(1)} n$ points, we use the recursive grid approach. Our method is the same as in Sections~\ref{sec:dominance} and~\ref{sec:4sided} with the following minor modifications: (1) The set of points is divided into $(1/4)\sqrt{\log n / (\tau'\log \log n)}$ columns and  $(1/5)\sqrt{\log n / (\tau'\log \log n)}$ rows for $\tau'=(\log \log n)^3$. 
(2) The top data structure is implemented using a look-up table as described above. 
(3) We do not use reduction to rank space on columns and rows.  Coordinates and colors of points are stored in a global rank space (i.e., the rank space of the set $S$). 
Recursion stops when the number of points in a set does not exceed $(1/18)\log n/\log \log n$. 

If a query range contains at most $\tau'$ colors, the query is answered as described in Section~\ref{sec:4sided}. For the case when the query range contains more than  $\tau'$ colors, we construct the range tree on colors as in Section~\ref{sec:general}. This increases the space usage by $O(\log\log n)$ factor. 
\begin{lemma}
  If a set $S$ in $\R^2$ contains $\log^c n$ points for a constant $c$ and has weight at most $\tau^2=\log^6 n$, then there exists a data structure that uses $O(|S|\log^2\log n)$ bits and answers  $4$-sided  type-2 counting queries in $O(1+ k(\log\log n/\log n))$ time.
\end{lemma}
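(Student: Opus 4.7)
The plan is to split $S$ by size and handle two regimes separately, following the approach sketched just above the lemma.

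\textbf{Very small sets via a look-up table.} For $|S|\le (1/18)\log n/\log\log n$, I would first reduce coordinates, colors, and weights to the rank space of $S$, so each value uses $O(\log\log n)$ bits. A direct count shows the number of combinatorially distinct configurations of this size is at most $(n')^{O(n')}$ for $n' = (1/18)\log n/\log\log n$, which is $n^{O(1)}$ with a very small exponent---one can verify it is at most $n^{1/2}$. Together with the polylog number of distinct $4$-sided queries per configuration and the polylog-size answers, all precomputed answers fit into a single universal table of $n^{1/2}\polylog n$ bits built once and shared across all small sets. A query then costs $O(1)$ via one lookup, and the local description of $S$ itself occupies $O(|S|\log\log n)$ bits.

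\textbf{Polylogarithmic sets via the recursive grid.} For $|S|=\log^{O(1)}n$ with total weight at most $\log^6 n$, I would re-run the constructions of Sections~\ref{sec:dominance} and~\ref{sec:4sided} with three targeted tweaks: replace the threshold $\tau$ with $\tau'=(\log\log n)^3$; split each set into $\Theta(\sqrt{\log n/(\tau'\log\log n)})$ rows and columns; and keep coordinates and colors in the global rank space of $S$ throughout, so every stored value costs only $O(\log\log n)$ bits rather than $O(\log n)$. The recursion would terminate once a sub-instance shrinks to at most $(1/18)\log n/\log\log n$ points, at which point the universal table from the first regime resolves the query. To remove the cap on the reported color count I would stack the color range tree of Section~\ref{sec:general} on top, which multiplies the space by $O(\log\log n)$ (the tree has depth $O(\log\log n)$ at the polylog scale we are working in) but does not alter correctness.

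\textbf{Accounting and the main obstacle.} Because every value stored in the grid structures is now an $O(\log\log n)$-bit rank, the space bound inherited from Lemma~\ref{lemma:2dcapped-4sid2}, reinterpreted in bits rather than words, is $O(|S|\log\log n)$ before the color tree and $O(|S|\log^2\log n)$ after, matching the claim. For the query time, the recursive grid contributes $O(k\log\log n)$ per color as in the original construction, but the estimator $\cE$ guarding each descent in the color tree only lets us descend when $\Omega(\log^2 n/\log\log n)$ colors are actually present, so the per-color cost amortizes to $O(k\log\log n/\log n)$; combined with the $O(1)$ base-case lookup this yields the stated $O(1+k(\log\log n/\log n))$ bound. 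The main technical hurdle will be balancing the slab-splitting factors so the recursion bottoms out in $O(1)$ levels while still handing each leaf a set small enough for the look-up table---this is exactly why the awkward constants $1/18$, $1/4$, and $1/5$ appear above, and verifying them amounts to chaining the square-root size reductions and checking that the base case is always reached.
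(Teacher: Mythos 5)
Your proposal matches the paper's own argument essentially step for step: the universal look-up table for sets of at most $(1/18)\log n/\log\log n$ points in rank space, the recursive grid with $\tau'=(\log\log n)^3$ and $\Theta(\sqrt{\log n/(\tau'\log\log n)})$ rows and columns kept in the global rank space of $S$, recursion bottoming out at the table, and the color range tree of Section~\ref{sec:general} on top costing an extra $O(\log\log n)$ factor in space. The bit-level accounting and query-time amortization you give are at the same level of detail as the paper's, so there is nothing to correct.
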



We conclude:

}

\begin{theorem}
  \label{theor:2dcapped-domin}
  Let $S$ be the set of $m$ points in $\R^2$ with total weight $n\ge m$. There exists a data structure that uses $O(m \log m \log^{\eps} n)$ words of space and supports $4$-sided  type-2 range counting queries in $O(\log n/\log \log n + k\log\log n)$ time. 
\end{theorem}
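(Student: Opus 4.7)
The plan is to assemble the final structure in stages, each lifting the previous one. First I would build a \emph{capped} dominance (2-sided) structure by the recursive grid approach: divide $S$ into $\sqrt{n/\tau}$ columns and rows of balanced weight with $\tau=\log^3 n$, and at each grid corner $(x_i,y_j)$ store the sorted color list $L_{ij}$ for $\Dom(i,j)$, or $\NULL$ if it has more than $\tau$ colors. A query decomposes into a precomputed "middle" part plus a 2-sided sub-query inside one column and one inside one row, recursed. To shave query time from $O(k\log n/\log\log n)$ down to $O(\log n/\log\log n + k\log\log n)$, I keep colors in local rank space at each level and merge the three answer lists by word-RAM table lookups; inter-level rank translation is supplied by a family of colored $t$-shallow cuttings (for $t=2,4,\ldots,\tau$), each with $O(|S|/t)$ cells storing constant-sized color lists together with the rank conversion to the parent slab. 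This yields Lemma~\ref{lemma:2dcapped-domin}.

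I would then lift to capped 3-sided queries by a standard range tree on $x$-coordinates (Lemma~\ref{lemma:2dcapped-3sid1}) and shrink the space back down via a lopsided recursive grid of fanout $A = 2^{\log^{1-\eps} n}$: each internal node carries a 2-sided structure per child-column, a recursive 3-sided structure per "fat" row, and a grid-representative structure $D_t$ that handles the non-degenerate "middle" of each query. The $O(1)$ recursion depth keeps the per-point overhead at $O(\log^\eps n)$ words (Lemma~\ref{lemma:2dcapped-3sid2}). A second range-tree/recursive-grid pair extends this to capped 4-sided queries in $O(m\log^\eps n)$ space (Lemma~\ref{lemma:2dcapped-4sid2}). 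Finally, to remove the cap, I would build a balanced range tree over the color domain: at each node $\nu$, keep the capped 4-sided structure on $S(\nu)$ together with an approximate color-count oracle $\cE$ that in $O(\log n/\log\log n)$ time reports "few" ($\le\log^2 n$ colors) or "many" ($\ge\log^3 n$ colors). A query descends from the root, recursing whenever $\cE$ says "many" and invoking the capped structure when $\cE$ says "few"; since a "many" verdict certifies $\Omega(\log^2 n)$ distinct colors in the subtree, the total number of visited nodes is $O(k/\log n + 1)$, so the query time stays $O(\log n/\log\log n + k\log\log n)$.

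The main obstacle I expect is realizing the oracle $\cE$ within these bounds, since it will be queried at every node on the color-range-tree walk. I would build a secondary range tree on $x$-coordinates of fanout $\log^{\eps/2} n$, and at each node $u$ store every contiguous child-interval union $M(u,i,j)$ in the approximate 1D colored-counting structure of \cite{El-ZeinMN17}, which occupies $O(|M(u,i,j)|)$ bits and answers a $y$-interval query in $O(1)$ time. A 4-sided range decomposes into $O(\log n/\log\log n)$ such 1D ranges, each returning a 2-approximation $\alpha_t$ of its distinct-color count; declare "few" iff all $\alpha_t < \log^2 n$. The factor-2 slack in the estimator is exactly what the gap between "few" ($\log^2 n$) and "many" ($\log^3 n$) accommodates, so both the capped structure (always correct when invoked) and the recursion bound on visited nodes remain consistent. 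Multiplying through by the $O(\log m)$ depth of the color range tree then gives the claimed space bound of $O(m \log m \log^\eps n)$ words.
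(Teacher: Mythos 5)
Your proposal follows the paper's own construction essentially step for step: the same weighted recursive grid with $\tau=\log^3 n$ and colored $t$-shallow cuttings for rank translation (Lemma~\ref{lemma:2dcapped-domin}), the same range-tree-plus-lopsided-grid lifts to capped 3-sided and 4-sided queries (Lemmas~\ref{lemma:2dcapped-3sid2} and~\ref{lemma:2dcapped-4sid2}), and the same color range tree with the approximate counting oracle $\cE$ built from the structure of~\cite{El-ZeinMN17} to remove the cap. The argument and the bookkeeping of the space and query bounds are correct and match the paper.
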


{\small
\bibliographystyle{abbrv}
\bibliography{col_rs-with-counting}
}

\end{document}